\documentclass[english,nolineno]{socg-lipics-v2019}
\hideLIPIcs




\newcommand{\Li}{\mathcal{L}}

\title{Terrain Visibility Graphs: Persistence is Not Enough}

\author{Safwa Ameer}{Department of Computer Science, The University of Texas at San Antonio, San Antonio, USA}{safwa.ameer@gmail.com}{}{}

\author{Matt Gibson-Lopez}{Department of Computer Science, The University of Texas at San Antonio, San Antonio, USA }{matthew.gibson@utsa.edu}{}{Supported by the National Science Foundation under Grant No.~1733874.}

\author{Erik Krohn}{Department of Computer Science, The University of Wisconsin - Oshkosh, Oshkosh, USA }{krohne@uwosh.edu}{}{}

\author{Sean Soderman}{Department of Computer Science, The University of Texas at San Antonio, San Antonio, USA }{sean.soderman@my.utsa.edu}{}{Supported by the National Science Foundation under Grant No.~1733874.}

\author{Qing Wang}{Department of Computer Science, The University of Tennessee at Martin, Martin, USA }{qwang44@utm.edu}{}{}

\authorrunning{S. Ameer, M. Gibson-Lopez, E. Krohn, S. Soderman, and Q. Wang}

\Copyright{Safwa Ameer, Matt Gibson-Lopez, Erik Krohn, Sean Soderman, and Qing Wang}

\ccsdesc[100]{Computational Geometry}
\keywords{Terrains, Visibility Graph Characterization, Visibility Graph Recognition}


\EventEditors{Sergio Cabello and Danny Z. Chen}
\EventNoEds{2}
\EventLongTitle{36th International Symposium on Computational Geometry (SoCG 2020)}
\EventShortTitle{SoCG 2020}
\EventAcronym{SoCG}
\EventYear{2020}
\EventDate{June 23--26, 2020}
\EventLocation{Z\"{u}rich, Switzerland}
\EventLogo{socg-logo}
\SeriesVolume{164}

\begin{document}

\maketitle

\begin{abstract}
    In this paper, we consider the Visibility Graph Recognition and Reconstruction problems in the context of terrains.   Here, we are given a graph $G$ with labeled vertices $v_0, v_1, \ldots, v_{n-1}$ such that  the labeling corresponds with a Hamiltonian path $H$.   $G$ also may contain other edges.  We are interested in determining if there is a terrain $T$ with vertices $p_0, p_1, \ldots, p_{n-1}$ such that $G$ is the visibility graph of $T$ and the boundary of $T$ corresponds with $H$.   $G$ is said to be persistent if and only if it satisfies the so-called X-property and Bar-property.  It is known that every "pseudo-terrain" has a persistent visibility graph and that every persistent graph is the visibility graph for some pseudo-terrain.  The connection is not as clear for (geometric) terrains.  It is known that the visibility graph of any terrain $T$ is persistent, but it has been unclear whether every persistent graph $G$ has a terrain $T$ such that $G$ is the visibility graph of $T$.  There actually have been several papers that claim this to be the case (although no formal proof has ever been published), and recent works made steps towards building a terrain reconstruction algorithm for any persistent graph.  In this paper, we show that there exists a persistent graph $G$ that is not the visibility graph for any terrain $T$.  This means persistence is not enough by itself to characterize the visibility graphs of terrains, and implies that pseudo-terrains are not stretchable.
\end{abstract}


 
\section{Introduction}
The notion of geometric visibility plays a fundamental role in many applications such as robotics \cite{corke2011robotics,niku2001introduction} and shortest path computation in the presence of obstacles \cite{lozano1979algorithm}.  One of the most fundamental data structures in visibility is the \textit{visibility graph} (VG).  Let $P$ be a simple polygon in the plane with $n$ vertices labeled $p_0, \ldots, p_{n-1}$ following the boundary of $P$ in "counter-clockwise" order.  $P$ partitions the plane into two sets: "inside $P$" and "outside $P$".  We say two vertices $p_i$ and $p_j$ see each other if and only if the line segment $\overline{p_ip_j}$ does not intersect the "outside $P$" region.  The VG $G$  of $P$ has a vertex $v_i$ for each point of $p_i$, and $\{v_i,v_j\}$ is an edge in $G$ if and only if $p_i$ and $p_j$ see each other in $P$.

Given a simple polygon $P$, computing its VG in polynomial-time is a fairly trivial matter; however, if we are given a graph $G$, determining if it is the VG for some simple polygon has remained a tantalizing open problem for over 30 years.  Along these lines, there are three main VG problems that have received much attention: 1) characterization, 2) recognition, and 3) reconstruction.  In the \textit {visibility graph characterization} problem, we seek to define a set of necessary and sufficient conditions that all VGs must satisfy.  In the \textit{visibility graph recognition} problem, we seek to design an algorithm that, given a graph $G$, determines if there is a simple polygon $P$ such that $G$ is the VG of $P$.  In the \textit{visibility graph reconstruction} problem, we are given a VG $G$ and we wish to reconstruct a simple polygon $P$ such that $G$ is the VG of $P$.

\subsection{Previous work}
The history of simple polygon VG characterization dates back to 1988, when Ghosh gave three necessary conditions (NCs) that any VG must satisfy \cite{Ghosh1988}.  Shortly after, Everett and Corneil \cite{Everett1990,EverettC95} showed a counterexample to the sufficiency of NCs 1-3; that is, they gave an example of a graph that satisfies NCs 1-3 but is not the VG of any simple polygon.  Everett \cite{Everett1990} also showed that a NC might need to be strengthened to rule this example out.  Srinivsraghavan and Mukhopadhyay \cite{SrinivasaraghavanM94} showed that a strengthening of this NC was in fact necessary, but a counterexample given by Abello, Lin, and Pisupati \cite{alp1992} showed that more NCs would be needed to complete the characterization.  In 1997, Ghosh \cite{Ghosh97} gave a fourth NC that circumvents the latest counterexample, but in 2005 Streinu gave an example of a graph that satisfies the four NCs but is not a VG for any simple polygon \cite{Streinu05}. 

Unfortunately, it is not known if simple polygon VG recognition is in NP.  Even for special cases, characterization and recognition results have only been given in the extremely restricted special cases of simple polygons such as ``spiral'' polygons \cite{EverettC95} and ``tower polygons'' \cite{ChoiSC95}.  

\subsection{Pseudo-visibility}
Faced with the complexity of understanding simple polygon VGs, O'Rourke and Streinu \cite{ORourkeS97} turned their attention to \textit{pseudo-polygons}, a generalization of simple polygons where visibility is determined by a set of curves in the plane called pseudo-lines.  An arrangement of \textit{pseudo-lines} $\Li$ is a collection of simple curves, each of which separates the plane, such that each pair of pseudo-lines in $\Li$ intersects at exactly one point, where they cross.  Given a set of $n$ points in the plane and a set of pseudo-lines $\Li$ such that every pair of points has a pseudo-line that contains them, a \textit{pseudo-polygon} is determined similarly to a standard Euclidean simple polygon except that visibility is defined using $\Li$ instead of straight line segments.  Note that every simple polygon is a pseudo-polygon, where $\Li$ is a set of straight line segments.  Streinu showed that there are pseudo-polygons that cannot be \textit{stretched} into a simple polygon \cite{Streinu05}.  That is, there is a pseudo-polygon such that its VG is not the VG for any simple polygon.

In 1997, O'Rourke and Streinu \cite{ORourkeS97} gave a characterization of \textit{vertex-edge} VGs of pseudo-polygons.  In this setting, for any vertex $v$ we are told which \textit{edges} $v$ sees rather than which vertices it sees.  Unfortunately this does not extend to the desired characterization of regular VGs, as O'Rourke and Streinu showed that vertex-edge VGs encode more information about a pseudo-polygon than a regular VG \cite{ORourke1998105}.  More recently, Gibson, Krohn, and Wang gave the desired characterization of the VGs of pseudo polygons \cite{gkw15} which has very recently been extended to a polynomial-time recognition and reconstruction algorithm \cite{pseudoRecog}. 

\subsection{The visibility graphs of terrains}

One geometric structure that has gathered a lot of attention in the computational geometry community is the terrain.  A \textit{terrain} $T$ is an x-monotone (a vertical line intersects it at most once) polygonal chain in the plane.  Let $T$ be a terrain with points labeled $p_0, \ldots, p_{n-1}$ from left to right.  Let $p_i^x$ denote the x-coordinate of the point $p_i$ on $T$.  Note that due to monotonicity, we have $p_i^x < p_{i+1}^x$ for each $i \in \{0,\ldots,n-2\}$.  We say points $p_i$ and $p_j$ see each other if and only if the open line segment $\overline{p_ip_j}$ lies completely above $T$.  Given this definition of vision, one can define the VG of a terrain similarly to that of a simple polygon.  

Abello et al.~\cite{abello1993combinatorial} studied so-called ``convex fans'' which is essentially a simple polygon $P$ that can be decomposed into a terrain $T$ and one additional point $p^*$ such that $p^*$ sees every point of $T$ (the boundary of $P$ uses the boundary of $T$ as well as the line segments $\overline{p^*p_0}$ and $\overline{p^*p_{n-1}}$).  They show that every simple polygon can be decomposed into some number of convex fans, and therefore a potential strategy of tackling the simple polygon problem is to take such a decomposition and handle the fans individually.  Since $p^*$ sees every point of the convex fan, the complexity in understanding the convex fan lies almost entirely with the analysis of the ``terrain portion'' of the convex fan.  

\subsection{Persistent graphs}
With a goal towards understanding the visibility graphs of convex fans, Abello et al.~\cite{abello1995visibility} defined a notion of so-called persistent graphs and established a connection with terrain visibility graphs and persistent graphs, which we will now describe.  Suppose we are given a graph $G$ with labeled vertices $v_0, v_1, \ldots v_{n-1}$ such that $\{v_i, v_{i+1}\}$ is an edge for each $i \in \{0, 1, \ldots, n-2\}$ (i.e., the labeling gives a Hamiltonian path).  Let $H$ denote this Hamiltonian Path.  $G$ also may contain other edges.  We are interested in determining if there is a terrain $T$ with points $p_0, p_1, \ldots, p_{n-1}$ such that $G$ is the visibility graph of $T$ and the boundary of $T$ corresponds with $H$.  

$G$ is said to be \textit{persistent} if and only if it satisfies the following two properties:

\begin{itemize}
    \item \textbf{X-property:} for any set of four distinct integers $a,b,c,d \in \{0,\ldots, n-1\}$ such that $a < b < c < d$, if $\{v_a,v_c\}$ and $\{v_b,v_d\}$ are edges in $G$ then $\{v_a,v_d\}$ is also an edge in $G$.
    
    \item \textbf{Bar-property}: for every edge $\{v_i, v_k\}$ in $G$ such that $k \geq i+2$, there exists a $j \in (i,k)$ such that $\{v_i,v_j\}$ and $\{v_j,v_k\}$ are edges in $G$.
\end{itemize}

Abello et al.~\cite{abello1995visibility} showed that for any terrain $T$, its visibility graph is persistent (albeit for a slightly different definition of persistence), and Evans and Saeedi \cite{EvansS15} showed it for the definition of persistence being used here.  

\begin{figure}
\centering
\begin{tabular}{c@{\hspace{0.1\linewidth}}c}

\includegraphics[scale=.2]{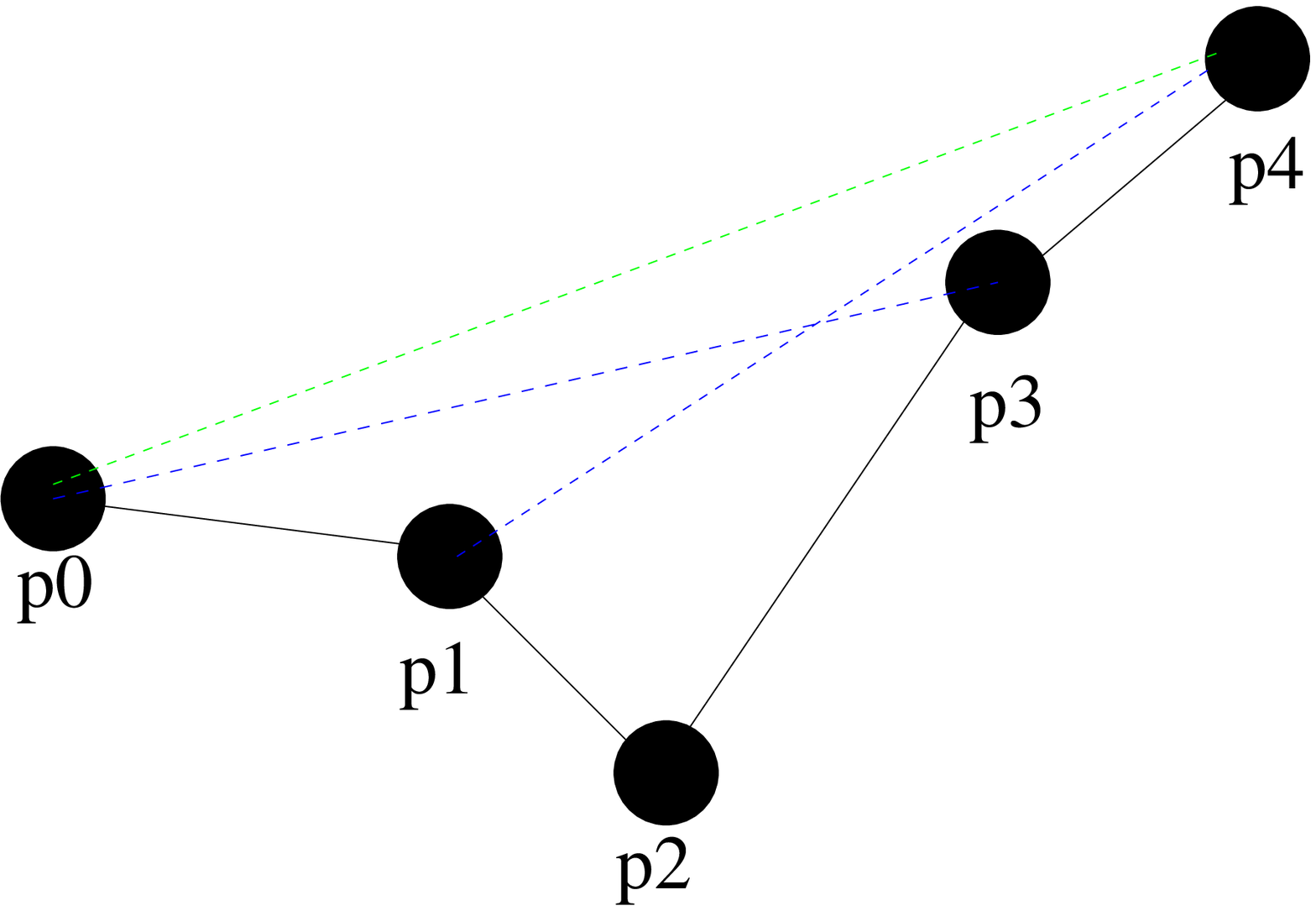}&
\includegraphics[scale=.2]{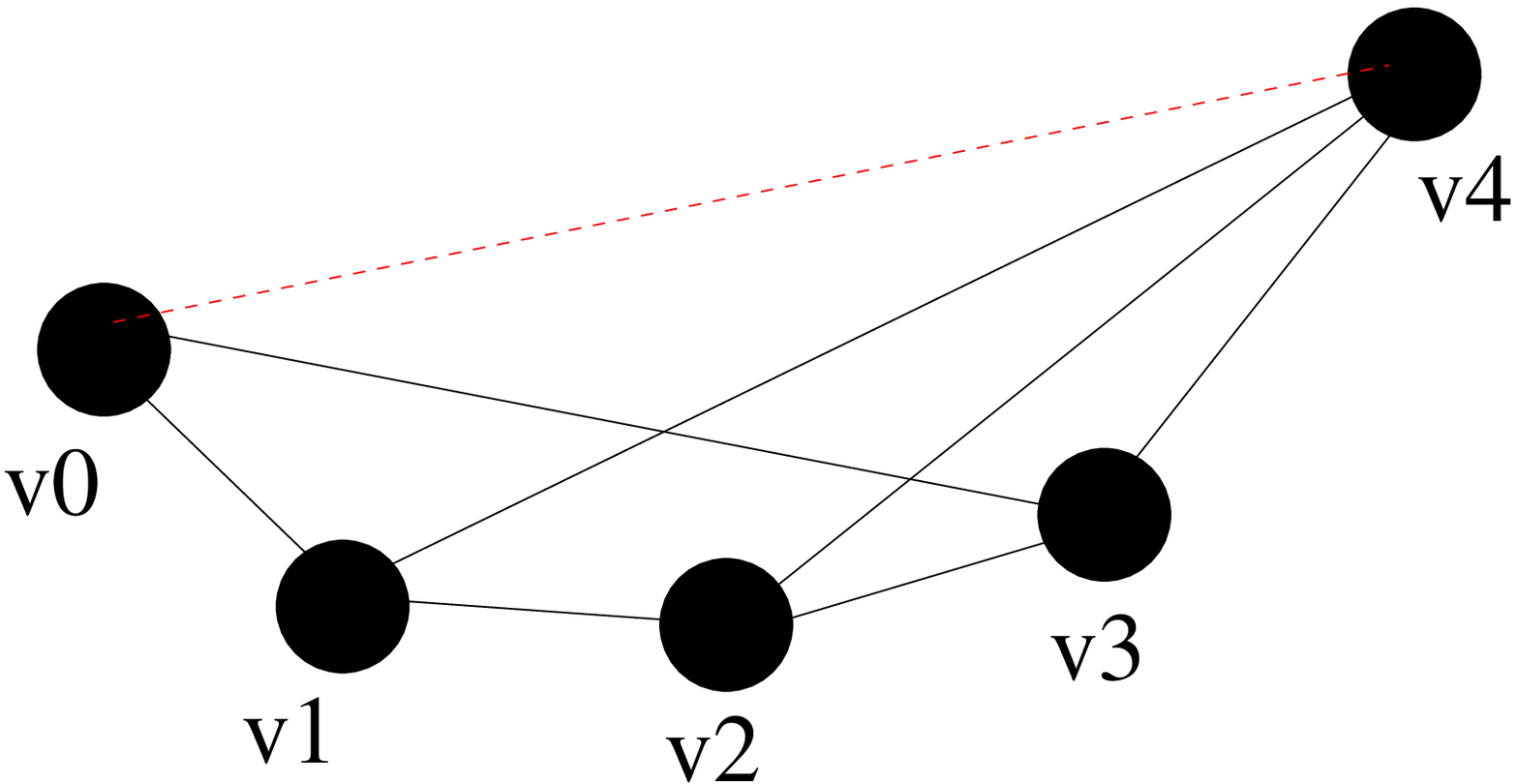}
\\
(a) & (b) 
\end{tabular}
\caption{\footnotesize An illustration of the X-property.}
\label{fig:xProp}
\end{figure}

We now will help develop intuition for these properties (see \cite{EvansS15} for a formal proof).  For the X-property (sometimes referred to as the "order claim"), consider Figure \ref{fig:xProp}.  In part (a), we have a terrain such that: (1) $p_0$ sees $p_3$, and (2) $p_1$ sees $p_4$ (the blue dotted lines).  Therefore no vertex between $p_0$ and $p_4$ is strictly above either of the blue dotted lines.  Then the line segment connecting $p_0$ and $p_4$ will be "above" the blue dotted lines and therefore $p_0$ must see $p_4$.  So now consider the graph in part (b).  If the edges $\{v_0,v_3\}$ and $\{v_1,v_4\}$ are in the graph but $\{v_0,v_4\}$ is not an edge in the graph then it cannot be the visibility graph of a terrain.  

\begin{figure}
\centering
\begin{tabular}{c@{\hspace{0.1\linewidth}}c}

\includegraphics[scale=.2]{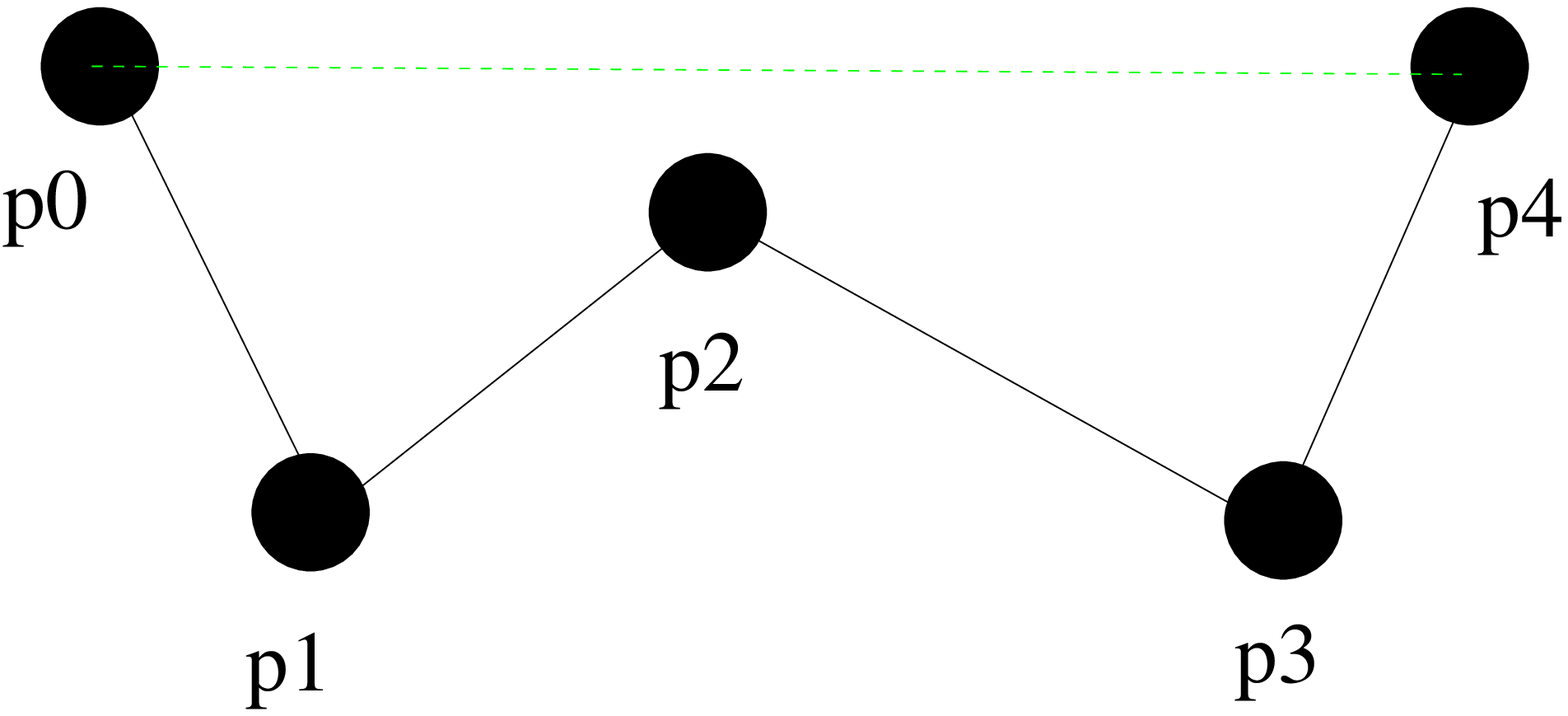}&
\includegraphics[scale=.2]{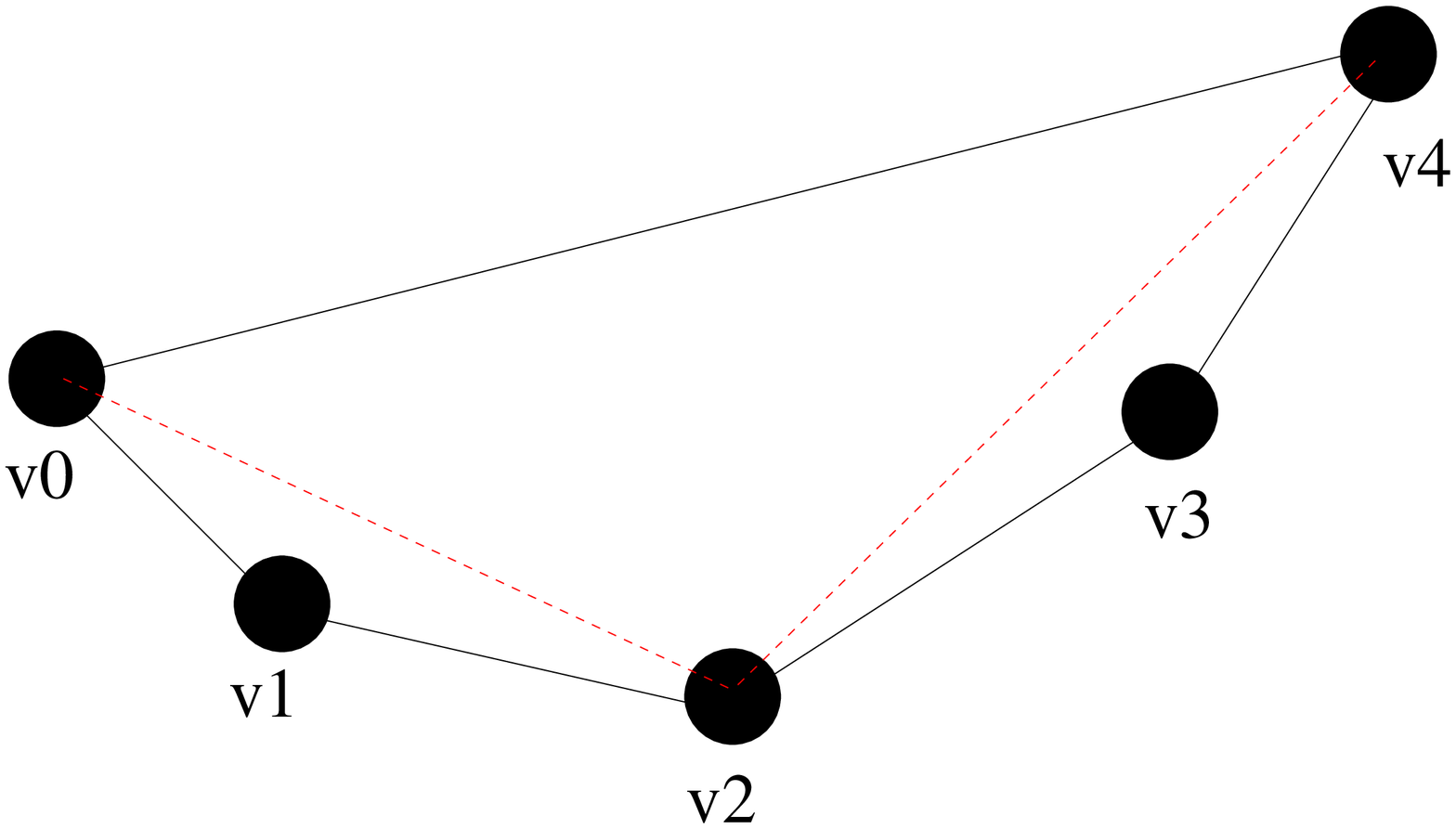}

\\
(a) & (b)
\end{tabular}
\caption{\footnotesize An illustration of the Bar-Property.}
\label{fig:barProp}
\end{figure}

For the Bar-property, see Figure \ref{fig:barProp}.  In the terrain in part (a), we have that $p_0$ sees $p_4$.  $p_1$ sees $p_0$, but it doesn't see $p_4$ because it is blocked by $p_2$.  Then it must be that $p_2$ also sees $p_0$.  Since $p_2$ also sees $p_4$ so we are done.  In general, if $p_i$ sees $p_k$, then $p_{i+1}$ must see $p_i$.  If $p_{i+1}$ also sees $p_k$ we are done, so suppose it doesn't see $p_k$ because there is some point $p_j$ for $j\in\{i+2, \ldots, k-1\}$ such that $p_{i+1}$ sees $p_j$, and $p_j$ is over the line segment $\overline{p_{i+1},p_k}$.  $p_j$ must see $p_i$, and if it sees $p_k$ we are done.  Otherwise we repeat this argument with the point that blocks $p_j$ from $p_k$, and eventually we find a point that sees both $p_i$ and $p_k$.  Therefore if the graph in part (b) only contains the black edges, it cannot be the visibility graph of a terrain, as the graph implies that $p_0$ should see $p_4$ but no other point between them should see both $p_0$ and $p_4$.

Abello et al.~\cite{abello1995visibility} showed a one-to-one correspondence between the VGs of pseudo-terrains (terrains using pseudo-lines to define visibilities rather than straight line segments) and persistent graphs.  That is, they show that the VG of any pseudo-terrain is persistent, and they show that any persistent graph has a pseudo-terrain and give a polynomial-time algorithm to reconstruct it.  Evans and Saeedi \cite{EvansS15} give a simpler proof (and a faster reconstruction algorithm) of the same result.

It has remained an open problem to show that persistent graphs and the visibility graphs of (geometric) terrains are exactly the same set (i.e., to show that $G$ is a persistent graph if and only if there is a terrain $T$ such that $G$ is the visibility graph of $T$). Several papers have made progress towards giving a reconstruction algorithm that can take a persistent graph $G$ as input and construct a terrain $T$ such that $G$ is the visibility graph of $T$.  In fact, there are papers \cite{abello1993combinatorial,abello2004majority} that claim that there exists such a reconstruction algorithm although a formal proof of this has not been published.  Evans and Saeedi \cite{EvansS15} state that they ideally would like to reconstruct a terrain from a persistent graph but that it seems difficult.  Most of the previous attempts to reconstruct terrains from a persistent graph involves an iterative placement of the points of the terrain (e.g., determining the x and y coordinates of the points of the terrain from left to right).

\subsection{Our contribution}
The main result of this paper is to prove that these two classes of graphs are in fact \textit{not} the same.  

\begin{theorem}
There is a persistent graph $G$ such that there is no terrain $T$ such that $G$ is the visibility graph of $T$.
\label{thm:mainResult}
\end{theorem}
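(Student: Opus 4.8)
The plan is to exhibit an explicit persistent graph $G$ and then prove, by a careful geometric argument, that no terrain realizes it. The first step is to construct the candidate graph. A natural source of obstruction is the interplay between the X-property and the combinatorial constraints imposed by straight-line (as opposed to pseudo-line) visibility: the X-property is a purely ``local'' order condition on four vertices, but in a geometric terrain the slopes of the visibility segments must be globally consistent. So I would search for a small configuration of edges where persistence holds but where realizing the edges forces a cyclic chain of slope inequalities (e.g. a sequence of ``just-barely-sees'' and ``just-barely-blocked'' relations that, when unrolled, demands $s_1 < s_2 < \cdots < s_k < s_1$ for some collection of segment slopes). Concretely I expect $G$ to contain a gadget built from several interleaved ``bar'' witnesses whose required blocking vertices, together with the X-property's forced edges, pin down the relative heights of many vertices in mutually incompatible ways. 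I would first describe $G$ by listing its vertices $v_0,\dots,v_{n-1}$ and its non-path edges, then verify the X-property and Bar-property directly (a finite check), so that $G$ is certified persistent.

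The second, harder step is the impossibility argument. Assuming for contradiction a terrain $T$ with vertices $p_0,\dots,p_{n-1}$ (left to right) realizing $G$, I would translate each edge and each non-edge into an inequality on the coordinates of the $p_i$. The key tool is the ``blocker'' analysis: for a non-edge $\{v_i,v_k\}$ with $i<k$, there must be some $j\in(i,k)$ with $p_j$ strictly above segment $\overline{p_ip_k}$, and I would use the edges that \emph{are} present (via the Bar-property witnesses and X-property consequences) to argue that this blocker is forced to be a specific vertex, or lies in a forced position. Chaining these together, I would derive from the visibility relations a set of linear inequalities among the slopes $\mathrm{slope}(p_ap_b)$ for various pairs, and show the system is infeasible — the contradiction. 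The structure of the argument should be: (i) fix canonical names for the forced blockers; (ii) for each relevant pair write down whether $p_c$ is above or below $\overline{p_ap_b}$, expressed as a slope comparison; (iii) observe that the resulting comparisons form a directed cycle.

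**Main obstacle.** The crux — and the part I expect to consume most of the effort — is step (ii)/(iii): making the blocker positions \emph{rigid} enough that the slope inequalities actually close into a contradiction rather than leaving a feasible cone of realizations. In a generic small graph there is too much freedom, so the gadget must be engineered so that every non-edge's blocker is essentially unique and every ``sees'' relation is tight (the segment passes arbitrarily close to an intermediate vertex). I would likely need to iterate on the construction: propose a graph, attempt the slope-cycle argument, identify where a realization can ``escape,'' and add edges/vertices to remove that degree of freedom — all while preserving persistence. A secondary subtlety is handling degeneracies (three collinear vertices, vertical-ish segments): I would argue these can be ruled out or perturbed away without affecting the visibility graph, so that the slope comparisons are genuinely strict. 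Once the gadget is rigid, the final contradiction should be a short deduction; getting there is the work.
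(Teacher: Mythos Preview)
Your high-level strategy (build an explicit persistent graph, translate visibilities and blockings into inequalities, and show infeasibility) matches the paper's, but the specific mechanism you are aiming for --- a single gadget whose visibility constraints force a cyclic chain of slope inequalities $s_1<s_2<\cdots<s_k<s_1$ --- is not how the obstruction actually arises, and I do not think that plan can be made to work as stated.

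Here is the gap. For a fixed terrain, ``$p_j$ is above (resp.\ below) $\overline{p_ip_k}$'' is, with the $x$-coordinates held constant, a \emph{linear} inequality in the $y$-coordinates. The paper exploits this by separating variables: fix the $x$-coordinate vector $X$, write all see/blocker constraints as a linear system $A\mathbf{y}\ge\mathbf{b}$, and use Farkas' Lemma to characterise exactly which $X$ admit a solution. Doing this for the natural seven-vertex gadget $G'$ does \emph{not} yield a contradiction; instead it yields a nontrivial \emph{multiplicative} condition on the $x$-gaps, namely $d_{0,1}d_{2,3}d_{3,4}d_{5,6}>d_{1,2}d_{4,5}d_{0,3}d_{3,6}$, and $G'$ is genuinely realizable whenever that holds. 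So no amount of tightening a single gadget will give you a clean slope cycle: every such gadget either has a realization or imposes a product inequality on the $x$-spacing, not a transitive linear chain among slopes. The paper then obtains the theorem by interleaving five copies of $G'$ inside a $35$-vertex persistent graph $G^*$ and proving, by a short combinatorial argument on the $x$-gaps, that the five product inequalities cannot be satisfied simultaneously.

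In short, two ingredients are missing from your plan: (i) the $x$/$y$ decoupling via linear programming and Farkas' Lemma, which reveals that the obstruction lives purely in the $x$-coordinates and is multiplicative rather than a linear slope cycle; and (ii) the realisation that one gadget is only \emph{constrained}, not infeasible, so you must weave several gadgets together so their $x$-constraints are jointly unsatisfiable. Your iterative ``propose a gadget, look for a slope cycle, patch the escape'' loop is unlikely to terminate because the escape is always through the $x$-coordinates, which your slope-comparison framework does not isolate.
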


We obtain this result by introducing a new style of reconstruction algorithm.  We show that if one can compute a set of feasible x-coordinates for the points of the terrain, then the y-coordinates can be computed via linear programming (LP).  Using standard LP analysis techniques, we identify a seven-vertex, persistent graph $G'$ that must have its x-coordinates chosen carefully in order to be able to reconstruct a terrain with $G'$ as its visibility graph.   We then build a graph $G^*$ that has thirty-five vertices which can be partitioned into five ``copies" of $G'$.  In order to represent $G^*$ as a terrain, we would need to pick the thirty-five x-coordinates in a way where each ``copy" of $G'$ has its condition satisfied, and we show that this is not possible.

Since $G^*$ is persistent, it is the visibility graph of some pseudo-terrain, and therefore our result also is a proof that pseudo-terrains are not stretchable.

\subsubsection{Organization of the paper}  In Section \ref{sec:reconstruct}, we describe our LP-based reconstruction algorithm.  In Section \ref{sec:pickyGraph}, we give our graph $G'$ and show that it requires very specifically chosen x-coordinates in order to be realizable as a terrain.  This critically uses our new LP-based reconstruction approach.  In Section \ref{sec:noTerrain}, we give our persistent graph $G^*$ and prove that there is no terrain that has it as its visibility graph.  We give a conclusion and some open problems in Section \ref{sec:conc}.

\section{Reconstructing terrains via linear programming}
\label{sec:reconstruct}
Let $G$ be a persistent graph with vertices $v_0, \ldots v_{n-1}$.  For any terrain $T$ with points $p_0, \ldots, p_{n-1}$, we let $p_i^x$ denote the x-coordinate of $p_i$.  Let $X = (x_0, x_1, \ldots x_{n-1})$ be a vector of real numbers such that $x_i < x_{i+1}$ for each $i \in \{0, 1, \ldots n-2\}$, and let $\mathcal{T}(G,X)$ be the set of all terrains $T$ with $n$ points such that:
\begin{enumerate}\item$p_0^x = x_0, p_1^x = x_1, \dots p_{n-1}^x = x_{n-1}$ (i.e., it is the set of all terrains that have x-coordinates that correspond with $X$).

\item The boundary of $T$ corresponds to the Hamiltonian path $H$.

\item $G$ is the visibility graph of $T$.
\end{enumerate}
For any two integers $i,j \in \{0, \ldots, n-1\}$ such that $i<j$, let $d_{i,j} := |x_i - x_j|$.  Intuitively, for a terrain $T \in \mathcal{T}(G,X)$, $d_{i,j}$ is the distance between the x-coordinates of $p_i$ and $p_j$.

We will now show that given $G$ and $X$, we can determine in polynomial-time if there is a terrain in $\mathcal{T}(G,X)$, and moreover if $\mathcal{T}(G,X) \neq \emptyset$ then we can compute in polynomial-time a feasible set of y-coordinates for some terrain $T \in \mathcal{T}(G,X)$.  This algorithm is via a reduction to linear programming (LP) where the variables of the LP are the y-coordinates of the points of the terrain $T$.  We show that given a fixed set of x-coordinates, we can model all of the visibility constraints that $T$ must satisfy as inequalities that are linear in the y-coordinates of the points of $T$.  It is not immediately obvious blocking constraints can be modeled as linear constraints (i.e, if $\{v_i,v_j\}$ is not an edge in $G$, ensuring that the y-coordinates are computed so that the points $p_i$ and $p_j$ do not see each other in $T$), but we will show that we can in fact model this as a linear constraint.

\begin{figure}[ht]
\centering
\begin{tabular}{c@{\hspace{0.1\linewidth}}c@{\hspace{0.1\linewidth}}
c}
\includegraphics[scale=0.75]{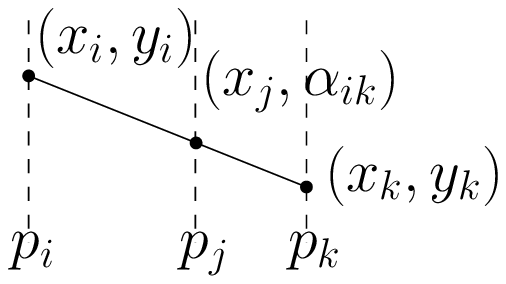}&
\includegraphics[scale=0.15]{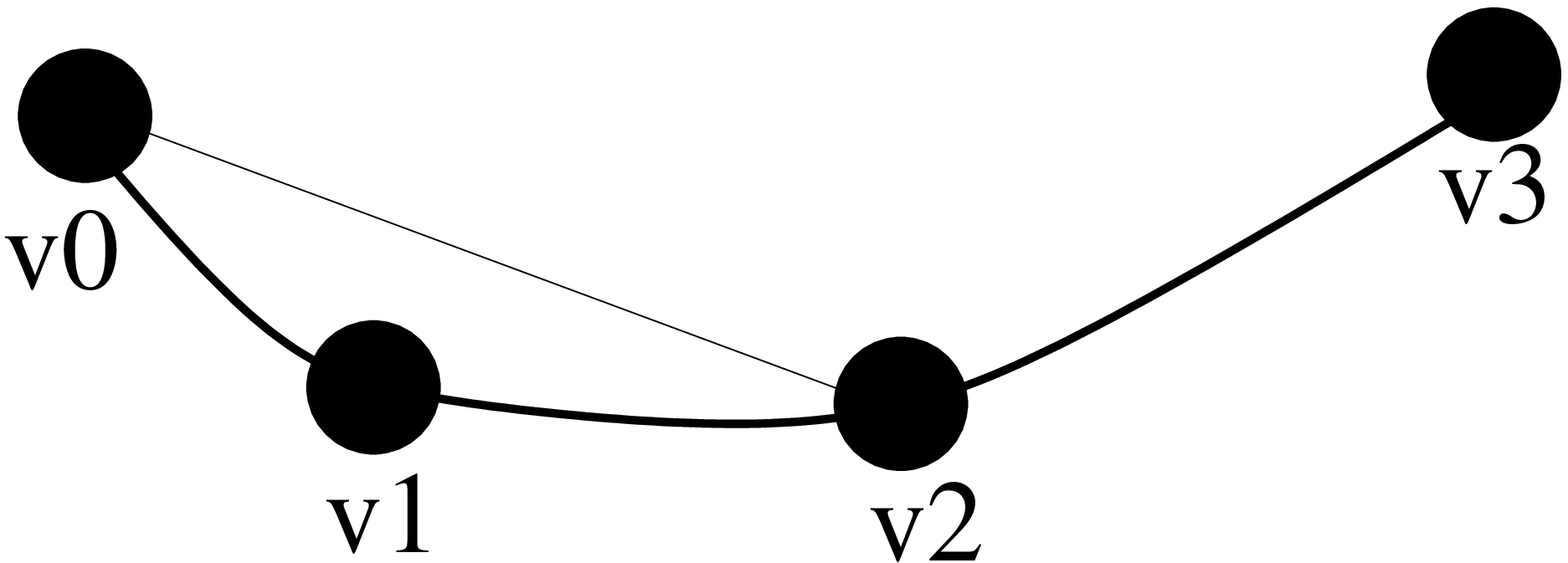}&
\includegraphics[scale=0.15]{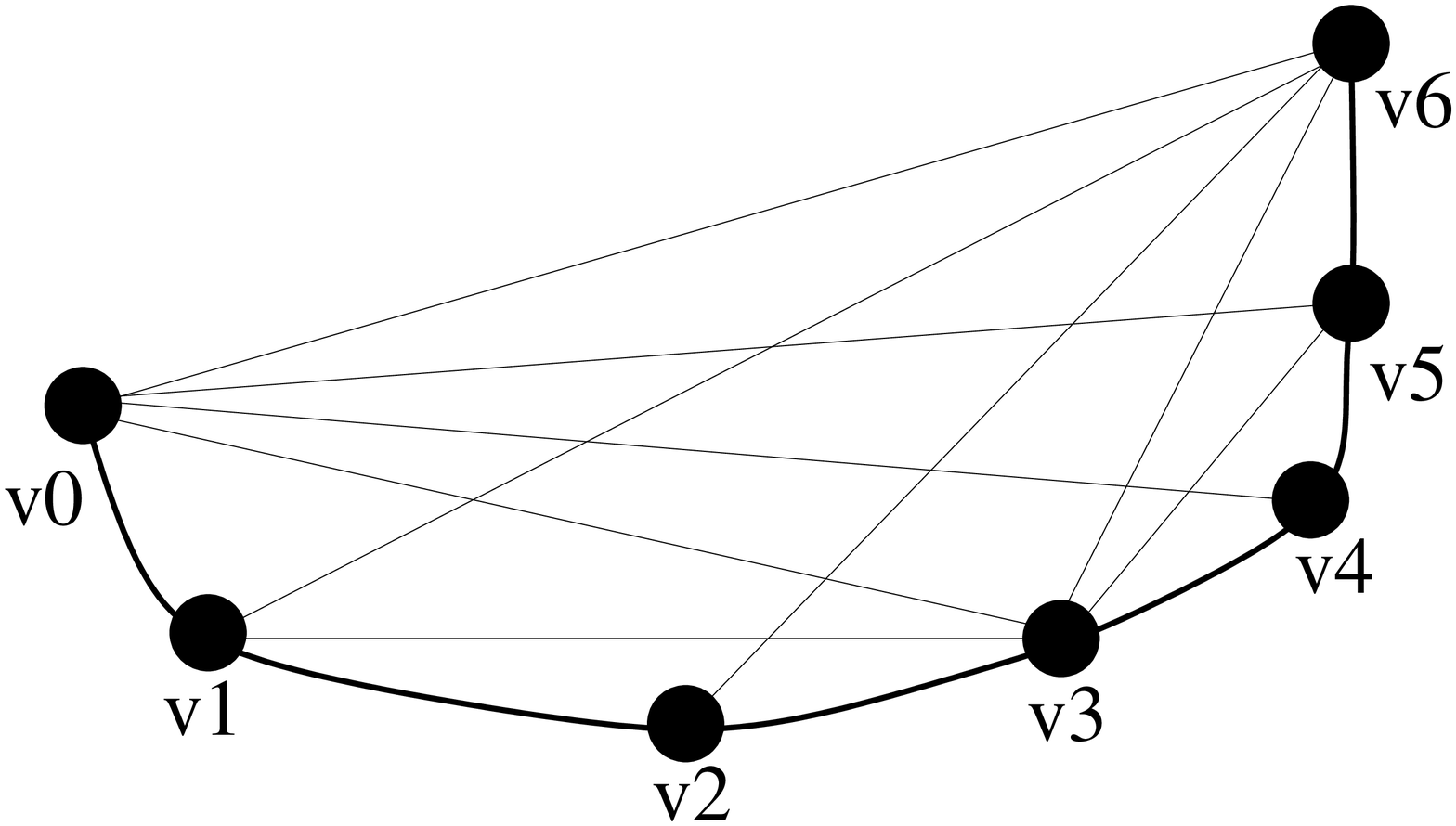}
\\
 (a) & (b) & (c)
\end{tabular}
 \caption{\footnotesize (a) LP constraint illustration.  (b) A sample terrain VG. (c) The VG G'.
 }
 \label{fig:construction}
\end{figure}

First let us consider a visibility constraint: let $\{v_i,v_k\}$ be an edge in $G$.  We must ensure that the y-coordinates $y_i$ and $y_k$ for $p_i$ and $p_k$ respectively are such that the line segment $\overline{p_ip_k}$ ``stays above" $T$.  We can ensure this, by enforcing that for every $j \in (i,k)$, we choose the y-coordinate $y_j$ such that $p_j$ is underneath $\overline{p_ip_k}$.  Let $\alpha^j_{ik}$ denote the $y$-coordinate of the intersection of $\overline{p_ip_k}$ and the vertical line $x = x_j$ (as illustrated in Figure \ref{fig:construction} (a)).  It is easy to see that $\alpha^j_{ik} = \frac{d_{j,k}\cdot y_i + d_{i,j}\cdot y_k}{d_{i,k}}$, a linear function of $y_i$ and $y_k$ since $d_{i,j}, d_{j,k}$, and $d_{i,k}$ are functions of the constant $x$-coordinates. Therefore the \textit{visibility constraint} $y_j < \alpha^j_{ik}$ is a linear inequality.  In our LP, we will write the constraint as $d_{j,k}\cdot y_i - d_{i,k}\cdot y_j + d_{i,j}\cdot y_k \geq \epsilon$ where $\epsilon$ is a positive constant.  Note that $\{v_i,v_k\}$ can have many constraints in the LP associated with it (although some of them may be redundant and can be removed without affecting the set of feasible solutions to the LP, more on this later).

Now suppose $v_i$ and $v_k$ are such that $\{v_i,v_k\}$ is not an edge in $G$.  Then we must enforce that the corresponding points $p_i$ and $p_k$ do not see each other in $T$.  This means that $\overline{p_ip_k}$ must cross under the terrain $T$.  We can do this by enforcing that \textit{some} point $p_j$ between $p_i$ and $p_k$ has its y-coordinate chosen to be large enough so that it is above $\overline{p_ip_k}$.  Unfortunately the notion that \textit{some} point must be over $\overline{p_ip_k}$  cannot directly be represented as a linear constraint (whereas in the previous case it had to be that \textit{every} point must be under $\overline{p_ip_k}$).  However we can see that by employing an analysis similar to the so-called \textit{designated blocker} from the analysis of pseudo-polygon visibility graphs \cite{gkw15}, we can identify a specific point (or two) that \textit{must} be above $\overline{p_ip_k}$ which allows us to express the constraint as a linear inequality.  To find the first such point, start at $v_k$ and ``walk to the left" along $H$ towards $v_i$ and let $v_j$ be the first vertex encountered such that $\{v_i,v_j\}$ is an edge in $G$ (note that such a vertex must exist; $\{v_i,v_{i+1}\}$ is an edge in $G$).  We claim that for every $T \in \mathcal{T}(G,X)$, it must be that $p_j$ is over $\overline{p_ip_k}$.  Suppose for the sake of contradiction that $p_j$ is under $\overline{p_ip_k}$.  If there is a point $p_z$ over $\overline{p_ip_k}$ such that $z < j$, then $p_i$ doesn't see $p_j$, a contradiction, so suppose there is no such point over $\overline{p_ip_k}$. So now let $p_z$ be the first point to the right of $p_j$ that is over $\overline{p_ip_k}$.  Then it must be that $p_i$ sees $p_z$, but $\{v_i,v_z\}$ is not an edge in $G$ by definition of $p_j$, a contradiction.  So it is true that for every $T \in \mathcal{T}(G,X)$, it must be that $p_j$ is over $\overline{p_ip_k}$, and we call $p_j$ the designated blocker to block $p_i$ from $p_k$.  Therefore we can add the blocking constraint $y_j > \alpha^j_{ik}$ to our LP.  We write this constraint $-d_{j,k}\cdot y_i + d_{i,k}\cdot y_j - d_{i,j}\cdot y_k \geq \epsilon$ where $\epsilon$ is a positive constant.  We symmetrically compute the designated blocker to block $p_k$ from seeing $p_i$.  Note that this point $p_{j'}$ may not be the same point as the first designated blocker $p_j$ (but it must be that $j \leq j'$ or else $G$ violates the X-property and therefore is not persistent).  If $j' \neq j$, then we add another blocking constraint for $p_{j'}$.  We again remark that sometimes these blocking constraints are redundant and can be removed without altering the set of feasible solutions to the LP.

The choice of $\epsilon$ does not effect whether or not there is a feasible solution to the LP (as long as $\epsilon$ is positive).  If there is a solution vector $\textbf{y}$ that is feasible with right hand side $\epsilon'$, then one can obtain a feasible solution with right hand side $\epsilon$ by scaling $\textbf{y}$ by a factor of $\frac{\epsilon}{\epsilon'}$.

To illustrate our approach, consider the example VG in Figure \ref{fig:construction} (b).  We will show how we construct the LP in order to reconstruct a terrain that has this graph as its VG.  Suppose $X = (0,1,2,3)$.  First note that since $\{v_0,v_2\}$ is an edge, we need the visibility constraint $y_1 \leq \alpha^1_{0,2} = \frac{y_0 + y_2}{2}$, which we can write as $y_0 - 2y_1 + y_2 \geq 1$.  Secondly note that $p_0$ and $p_3$ do not see each other and $p_2$ is the designated blocker.  Therefore we add the blocking constraint $y_2 > \alpha^2_{0,3} = \frac{y_0 + 2y_3}{3}$, which we state as $-y_0 + 3y_2 - 2y_3 \geq 1$.  Note $p_1$ does not see $p_3$ and has designated blocker $p_2$, but this constraint is redundant with the other two constraints.  Therefore our final LP is the following: $y_0 - 2y_1 + y_2 \geq 1; -y_0 + 3y_2 - 2y_3 \geq 1$. Any feasible solution to this LP will give $y$-coordinates for a terrain $T$ such that $G$ is the VG of $T$.

One of the advantages of the LP-based approach is we can use standard LP techniques to help us determine what (if any) constraints on x-coordinates need to be satisfied in order to reconstruct the terrain (or determine that no x-coordinates are possible).  In particular, we will be using the well-known \textit{Farkas' Lemma}.  Let $m$ denote the number of constraints in our LP, and let $n$ be the number of points in the terrain.  The LP can be represented as $\textbf{Ay} \geq \textbf{b}$, where $\textbf{A}$ is an $m \times n$ matrix of coefficients, $\textbf{y} \in \mathbb{R}^n$ is the vector of y-coordinate variables of the LP, and $\textbf{b} = \{\epsilon\}^m$ for some $\epsilon > 0$.  Then Farkas' Lemma \cite{Lasserre200467} says that \textit{exactly one} of the following two statements is true: \begin{enumerate}
    \item there exists a $\textbf{y}$ satisfying $\textbf{Ay} \geq \textbf{b}$ (i.e., there exists a terrain in $\mathcal{T}(G,X)$) \item there is a $\textbf{z} \in \mathbb{R}^m$ such that $\textbf{z} \leq 0$, $\textbf{A}^\text{T}\textbf{z} \geq 0$ and $\textbf{b}^\text{T}\textbf{z} < 0$.
\end{enumerate}  Our result heavily relies on the use of Case 2 of Farkas' Lemma to determine exactly which $X$ vectors create a non-empty $\mathcal{T}(G,X)$ for a given persistent graph $G$.

\section{A picky persistent graph}
\label{sec:pickyGraph}

In this section, we will prove one of the key lemmas that leads to our result: there is a persistent graph that requires its x-coordinates to satisfy a strict inequality in order for there to be a feasible solution to the LP. The same visibility graph was analyzed in \cite{abello1993visibility} where they showed that this graph cannot be represented with ``uniform step lengths" 
(which in our context means that for any $c>0$ we have $d_{i,i+1} = c$).  While this graph has been observed in previous works, what is new in this paper is the exact requirements that the x-coordinates must satisfy in order for there to be a terrain.

Let $G'$ be the visibility graph in Figure \ref{fig:construction} (c).  A terrain that has $G'$ as its visibility graph is shown in Figure \ref{fig:gPrime}.  Consider the LP using the following constraints: (1) $p_1$ should be above $\overline{p_0p_2}$, (2) $p_3$ should be under $\overline{p_0p_4}$, (3) $p_3$ should be over $\overline{p_1p_5}$, (4) $p_3$ should be under $\overline{p_2p_6}$, (5) $p_5$ should be under $\overline{p_3p_6}$, and (6) $p_5$ should be over $\overline{p_4p_6}$.  Note that there are other constraints we aren't explicitly stating here such as $p_3$ being under $\overline{p_0p_5}$ (we will show they are redundant and adding them does not affect the feasible region of the LP; removing the redundant constraints will simplify the later analyses).  Here the number of constraints $m = 6$ and the number of points $n = 7$.  We express this LP in the form $\textbf{Ay} \geq \textbf{b}$ where \textbf{A}, \textbf{y}, and \textbf{b} are as follows:

\begin{center}

\begin{tabular}{c c c}
$\textbf{A} = 
\begin{vmatrix}
-d_{1,2} & d_{0,2} & -d_{0,1}  & 0   & 0 & 0 & 0 \\
d_{3,4} & 0          & 0    & -d_{0,4}      & d_{0,3}     & 0 & 0\\
0    & -d_{3,5}    & 0         & d_{1,5}    & 0    & -d_{1,3} & 0   \\
0     & 0 & d_{3,6} & -d_{2,6} & 0        & 0     & d_{2,3}\\
0     & 0    & 0         & d_{5,6}   & 0   & -d_{3,6} & d_{3,5}   \\
0     & 0          & 0    & 0          & -d_{5,6} & d_{4,6} & -d_{4,5}
\end{vmatrix}  $ 

&
$\textbf{y} = 
\begin{vmatrix}
y_0 \\
y_1 \\
y_2 \\
y_3 \\
y_4 \\
y_5 \\
y_6
\end{vmatrix}   $ 
&
$\textbf{b} = 
\begin{vmatrix}
\epsilon \\
\epsilon \\
\epsilon \\
\epsilon \\
\epsilon \\
\epsilon 
\end{vmatrix}  $
\end{tabular}
\end{center}

Again, $\epsilon$ is a positive constant.  Let $T(X,\textbf{y})$ denote the $n$-point terrain whose x-coordinates correspond with $X$ and y-coordinates correspond with $\textbf{y}$.  Clearly if $T$ is a terrain in $\mathcal{T}(G',X)$ then the vector of y-coordinates of its points is a feasible solution to this LP.  We will now argue that if $\textbf{y}$ is a feasible solution to this LP then $T(X,\textbf{y}) \in \mathcal{T}(G',X)$.  

\begin{lemma}
Let $\textbf{y}$ be a feasible solution to the LP.  Then the visibility graph of $T(X,\textbf{y})$ is $G'$.
\label{lem:lpFeas}
\end{lemma}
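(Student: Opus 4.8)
The plan is to show that the feasible solution $\textbf{y}$ yields a terrain $T(X,\textbf{y})$ whose visibility graph is exactly $G'$, which splits into two directions: every edge of $G'$ is a visibility in $T(X,\textbf{y})$, and every non-edge of $G'$ is a non-visibility. The six explicit LP constraints only directly encode a handful of these facts, so the work is in recovering all the remaining visibility/blocking relations from them. First I would fix the labeling and list all pairs $\{v_i,v_j\}$ with $j \ge i+2$, classifying each as an edge or non-edge of $G'$ according to Figure \ref{fig:construction}(c); the consecutive edges $\{v_i,v_{i+1}\}$ are automatically visibilities since no point lies strictly between, so they need no argument.

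For the edges, I would argue that $\{v_i,v_k\} \in E(G')$ forces $p_j$ below $\overline{p_ip_k}$ for every $j \in (i,k)$. The constraints (1)--(6) give this for the "tight" triples; for the remaining intermediate points I would derive the needed inequality by combining the given constraints, exactly as the excerpt promised when it said the unstated constraints (e.g. $p_3$ under $\overline{p_0p_5}$) are redundant. Concretely, if $p_a$ is below $\overline{p_ip_j}$ and $p_a$ is below $\overline{p_jp_k}$ and $p_j$ is below $\overline{p_ip_k}$, then $p_a$ is below $\overline{p_ip_k}$ — a convexity/transitivity-of-being-below argument on the piecewise-linear upper envelope. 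Iterating this bookkeeping over the small graph shows each claimed edge of $G'$ is realized. For the non-edges, I would show $\overline{p_ip_k}$ passes strictly below the terrain by exhibiting the designated blocker: the constraints place $p_1$ above $\overline{p_0p_2}$, $p_3$ above $\overline{p_1p_5}$, and $p_5$ above $\overline{p_4p_6}$, and from these (again combined with the "below" facts for the edges) every non-edge $\{v_i,v_k\}$ has some $p_j$, $j\in(i,k)$, strictly above $\overline{p_ip_k}$, so $p_i$ and $p_k$ do not see each other. I would also need to check that $T(X,\textbf{y})$ is a genuine terrain, i.e. $x$-monotone, which is immediate since $X$ is strictly increasing by hypothesis.

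I expect the main obstacle to be the redundancy/derivation step: verifying that the six chosen constraints actually imply all the omitted visibility and blocking constraints for $G'$. This is a finite but fiddly case analysis over the roughly $\binom{7}{2}$ vertex pairs, and one has to be careful that the derived inequalities are strict (the $\epsilon>0$ on the right-hand side handles strictness, but one must track that combining two strict inequalities with nonnegative coefficients preserves strictness, and that no coefficient used is zero in a way that kills the margin). A clean way to organize it is to process the intermediate points in order of the span $k-i$: once all visibilities/blockings for spans up to $\ell$ are established, deduce those for span $\ell+1$ using the "below-implies-below" and "blocker-exists" lemmas together with the X-property and Bar-property of $G'$. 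Since $G'$ has only seven vertices, this terminates quickly, and at the end the visibility graph of $T(X,\textbf{y})$ coincides edge-for-edge with $G'$, proving the lemma.
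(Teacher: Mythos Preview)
Your proposal is correct and follows essentially the same approach as the paper: both arguments are finite case analyses over the $\binom{7}{2}$ pairs, using the six explicit constraints together with elementary ``below-transitivity'' facts to recover all remaining visibility and blocking inequalities. The paper's proof is simply a concrete execution of the plan you describe---it walks through the pairs in a particular order (first $i,j\ge 3$, then $p_0$'s visibilities, then $p_1,p_2$ versus $p_3,p_6$, then the remaining non-edges), invoking the X-property once to get $p_1$ sees $p_6$, rather than organizing by span length as you suggest---but the underlying logic is identical.
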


\begin{proof}
The combination of constraints 5 ($p_5$ should be under $\overline{p_3p_6}$) and 6 ($p_5$ should be over $\overline{p_4p_6}$) directly implies that the visibilities of $p_i$ and $p_j$ correctly match those given by $G'$  for $v_i$ and $v_j$ for each pair when $i,j \geq 3$.  In particular, $p_4$ must be under $\overline{p_3p_5}$ and $\overline{p_3p_6}$.

Now consider point $p_0$.  Constraint 2 ($p_3$ should be under $\overline{p_0p_4}$) implies that $p_0$ will see $p_3, p_4, p_5$, and $p_6$ as long as $p_1$ and $p_2$ do not block them.  Constraint 3 ($p_3$ should be over $\overline{p_1p_5}$) ensures that $p_1$ will be under $\overline{p_0p_3}$ and then Constraint 1 ($p_1$ should be above $\overline{p_0p_2}$) implies $p_2$ is under $\overline{p_0p_3}$ and $\overline{p_1p_3}$.  Therefore $p_0$ will correctly see $p_3, p_4, p_5$, and $p_6$.  Moreover, Constraint 1 directly implies that $p_0$ will not see $p_2$, and therefore all visibilities corresponding to $p_0$ are correct.  

The fact that $p_2$ is under $\overline{p_1p_3}$ implies that $p_1$ and $p_3$ correctly see each other.  Constraint 4 ($p_3$ should be under $\overline{p_2p_6}$) implies that $p_2$ will correctly see $p_6$ given the earlier configurations of $p_4$ and $p_5$.  So using the fact that the visibility graph of any terrain satisfies the X-property, we can see that $p_1$ correctly sees $p_6$ (applying the X-property with $a = 1, b = 2, c = 3$, and $d = 6$).

Finally we need that $p_i$ does not see $p_j$ for $i\in \{1,2\}$ and $j \in \{4,5\}$.  $p_1$ does not see $p_5$ as directly implied by Constraint 3, and we already showed the following: $p_2$ is under $\overline{p_1p_3}$, $p_4$ is under $\overline{p_3,p_5}$.  This implies the remaining three pairs of points correctly do not see each other.
\end{proof}

The following lemma uses Farkas' Lemma to determine requirements on $X$ (which in turn determines $\textbf{A}$) in order to have $\mathcal{T}(G',X) \neq \emptyset$.

\begin{lemma}
There is a terrain $T \in \mathcal{T}(G',X)$ if and only if $X$ satisfies $d_{0,1}d_{2,3}d_{3,4}d_{5,6} > d_{1,2}d_{4,5}d_{0,3}d_{3,6}$.
\label{lem:mainIneq}
\end{lemma}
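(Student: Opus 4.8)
The plan is to combine Lemma~\ref{lem:lpFeas} with Case~2 of Farkas' Lemma. By Lemma~\ref{lem:lpFeas}, $\mathcal{T}(G',X)\neq\emptyset$ exactly when the LP $\mathbf{Ay}\ge\mathbf b$ is feasible. Since $\mathbf b=\{\epsilon\}^{6}$ with $\epsilon>0$, the requirement $\mathbf b^{\mathrm T}\mathbf z<0$ in Case~2 (together with $\mathbf z\le\mathbf 0$) is just $\mathbf z\neq\mathbf 0$, so Farkas' Lemma gives: there is a terrain in $\mathcal T(G',X)$ if and only if there is \emph{no} nonzero $\mathbf w=(w_1,\dots,w_6)\ge\mathbf 0$ with $\mathbf A^{\mathrm T}\mathbf w\le\mathbf 0$ (take $\mathbf w=-\mathbf z$). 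Reading the seven columns of $\mathbf A$ off the matrix, $\mathbf A^{\mathrm T}\mathbf w\le\mathbf 0$ unwinds to the seven inequalities
\begin{align*}
&d_{3,4}w_2\le d_{1,2}w_1, \quad d_{0,2}w_1\le d_{3,5}w_3, \quad d_{3,6}w_4\le d_{0,1}w_1, \quad d_{0,3}w_2\le d_{5,6}w_6,\\
&d_{1,5}w_3+d_{5,6}w_5\le d_{0,4}w_2+d_{2,6}w_4, \quad d_{4,6}w_6\le d_{1,3}w_3+d_{3,6}w_5, \quad d_{2,3}w_4+d_{3,5}w_5\le d_{4,5}w_6,
\end{align*}
and I would reduce the lemma to showing that \emph{this} system has a nonzero solution iff $d_{0,1}d_{2,3}d_{3,4}d_{5,6}\le d_{1,2}d_{4,5}d_{0,3}d_{3,6}$. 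The only facts about the $d$'s I expect to need are additivity, $d_{i,k}=d_{i,j}+d_{j,k}$ for $i<j<k$, and the consequent ``Ptolemy'' identity $d_{a,c}d_{b,d}=d_{a,d}d_{b,c}+d_{a,b}d_{c,d}$ for $a<b<c<d$.

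The heart of the argument is an explicit vector in $\ker\mathbf A^{\mathrm T}$. I would take $\mathbf w^{*}$ to be the solution (unique up to scaling) that makes the first four inequalities and the last one equalities --- i.e.\ $w^*_1$ is free and the first, second, third, fourth, and seventh inequalities determine $w^*_2,w^*_3,w^*_4,w^*_6,w^*_5$ in turn; clearing denominators this is
\[
\mathbf w^{*}=\bigl(d_{3,4}d_{3,5}d_{3,6}d_{5,6},\; d_{1,2}d_{3,5}d_{3,6}d_{5,6},\; d_{0,2}d_{3,4}d_{3,6}d_{5,6},\; d_{0,1}d_{3,4}d_{3,5}d_{5,6},\; \tilde d,\; d_{0,3}d_{1,2}d_{3,5}d_{3,6}\bigr),
\]
where $\tilde d:=d_{1,2}d_{4,5}d_{0,3}d_{3,6}-d_{0,1}d_{2,3}d_{3,4}d_{5,6}$, so that $d_{0,1}d_{2,3}d_{3,4}d_{5,6}\le d_{1,2}d_{4,5}d_{0,3}d_{3,6}$ iff $\tilde d\ge 0$. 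The first thing to verify is that the remaining two inequalities (the fifth and sixth) then hold automatically, in fact with equality, so that $\mathbf A^{\mathrm T}\mathbf w^{*}=\mathbf 0$; this uses the additivity relations together with the Ptolemy identity in the instances $(a,b,c,d)=(0,1,2,3)$ and $(3,4,5,6)$. Note that all coordinates of $\mathbf w^{*}$ except the fifth are always strictly positive, while the fifth coordinate is $\tilde d$.

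Given this, the two directions fall out. If $\tilde d\ge 0$, then $\mathbf w^{*}$ is a nonzero nonnegative vector with $\mathbf A^{\mathrm T}\mathbf w^{*}=\mathbf 0\le\mathbf 0$; equivalently $-\mathbf w^{*}$ is a Farkas certificate, so $\mathcal T(G',X)=\emptyset$. For the converse, assume $\tilde d<0$ and suppose, for contradiction, that the system has a nonzero solution $\mathbf w\ge\mathbf 0$. I would set $\mathbf w':=\mathbf w-(w_5/\tilde d)\,\mathbf w^{*}$: since $w_5\ge0$ and $\tilde d<0$ the coefficient $w_5/\tilde d$ is $\le 0$ and all coordinates of $\mathbf w^{*}$ except the fifth are positive, so $\mathbf w'\ge\mathbf 0$; by construction $w'_5=0$; $\mathbf w'\neq\mathbf 0$ because a nonzero $\mathbf w\ge\mathbf 0$ cannot be a scalar multiple of $\mathbf w^{*}$ (which has a negative coordinate); and $\mathbf A^{\mathrm T}\mathbf w'=\mathbf A^{\mathrm T}\mathbf w\le\mathbf 0$ since $\mathbf w^{*}\in\ker\mathbf A^{\mathrm T}$. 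A short cascade shows $w'_1>0$ (if $w'_1=0$, inequalities one, three, five and six force $\mathbf w'=\mathbf 0$), hence also $w'_3>0$ by the second inequality. Now put $w'_5=0$ into the last four inequalities and chain them: the fourth and sixth give $d_{0,3}d_{4,6}\,w'_2\le d_{1,3}d_{5,6}\,w'_3$, the seventh and sixth give $d_{2,3}d_{4,6}\,w'_4\le d_{1,3}d_{4,5}\,w'_3$, and feeding these into the fifth inequality (then dividing by $w'_3>0$ and clearing denominators) produces
\[
d_{1,5}d_{0,3}d_{2,3}d_{4,6}\;\le\;d_{0,4}d_{1,3}d_{2,3}d_{5,6}+d_{2,6}d_{1,3}d_{0,3}d_{4,5}.
\]
The final check is that the right-hand side minus the left-hand side equals exactly $\tilde d$ (again via additivity and the Ptolemy identity for $(3,4,5,6)$), which is negative by assumption --- a contradiction. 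So no nonzero solution exists, $\mathcal T(G',X)\neq\emptyset$, and combining the two cases with Lemma~\ref{lem:lpFeas} proves the lemma.

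The step I expect to be the main obstacle is the symbolic verification of the two polynomial identities in the $d_{i,j}$: that $\mathbf A^{\mathrm T}\mathbf w^{*}=\mathbf 0$ (the nontrivial coordinates being the fifth and sixth), and that the two sides of the displayed inequality in the converse differ by exactly $\tilde d$. Both are routine expansions but sign-error-prone; everything else --- the Farkas setup, the $w_1>0$ cascades, the shift by $\mathbf w^{*}$, and the chaining --- is straightforward.
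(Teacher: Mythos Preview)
Your argument is correct, and for the ``only if'' direction it coincides with the paper's: your $\mathbf w^{*}$ is precisely $-d_{1,2}d_{0,3}d_{3,5}d_{3,6}\cdot\mathbf z$ for the paper's Farkas certificate $\mathbf z$, so the verification that $\mathbf A^{\mathrm T}\mathbf w^{*}=\mathbf 0$ is the same appendix computation (and, as you note, five of the seven coordinates hold by construction, leaving only the $y_3$- and $y_5$-rows to check). The ``if'' direction, however, is handled differently. The paper proves feasibility when $\tilde d<0$ by exhibiting an explicit $\mathbf y\in\mathbb R^{7}$ and verifying $\mathbf A\mathbf y\ge\mathbf b$ row by row (six lengthy computations in the appendix). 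You instead stay on the Farkas side: assuming a nonzero nonnegative $\mathbf w$ with $\mathbf A^{\mathrm T}\mathbf w\le\mathbf 0$ exists, you subtract a multiple of the kernel vector $\mathbf w^{*}$ to kill the $w_5$ coordinate, then chain the remaining six inequalities to obtain a single polynomial inequality whose defect is exactly $\tilde d$. This is a genuine alternative. What it buys you is that the heavy symbolic work is concentrated in fewer identities---the two nontrivial coordinates of $\mathbf A^{\mathrm T}\mathbf w^{*}$ plus the final identity $d_{0,4}d_{1,3}d_{2,3}d_{5,6}+d_{2,6}d_{1,3}d_{0,3}d_{4,5}-d_{1,5}d_{0,3}d_{2,3}d_{4,6}=\tilde d$---rather than six $\mathbf A\mathbf y$-row expansions on top of seven $\mathbf A^{\mathrm T}\mathbf z$-row expansions. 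What the paper's explicit $\mathbf y$ buys is a concrete terrain (useful elsewhere, e.g.\ for drawing Figure~\ref{fig:gPrime}) and an argument that does not require the structural shift-and-cascade step.
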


\begin{proof}



Suppose that $X$ satisfies $d_{0,1}d_{2,3}d_{3,4}d_{5,6} > d_{1,2}d_{4,5}d_{0,3}d_{3,6}$.  Let $\epsilon$ (which appears in \textbf{b}) be the minimum of $d_{3,5}(d_{0,1}d_{2,3}d_{3,4}d_{5,6} - d_{1,2}d_{4,5}d_{0,3}d_{3,6})$ and $d_{3,5}(d_{0,1}d_{3,4}d_{5,6}d_{3,5} + d_{3,4}d_{5,6}d_{0,2}d_{3,6}
    + d_{1,2}d_{5,6}d_{3,5}d_{3,6} + d_{3,4}d_{5,6}d_{3,5}d_{3,6}
    + d_{1,2}d_{0,3}d_{3,6}d_{3,5})$.  Note that $\epsilon$ is strictly positive given our assumption on $X$.  We show that the following vector $\textbf{y}$ is a feasible solution to the LP (work shown in the appendix):

\[\textbf{y} =
    \begin{vmatrix}
    d_{0,1}d_{2,3}d_{3,5}d_{5,6} + d_{0,3}d_{3,5}d_{0,1}d_{2,3} + 
    d_{0,3}d_{3,5}d_{0,1}d_{4,5} +
    d_{4,5}d_{0,3}d_{0,2}d_{3,6} +
    d_{0,3}d_{3,5}d_{4,5}d_{3,6}\\
    \\
    
    d_{1,2}d_{4,5}d_{0,3}d_{3,6} - d_{0,1}d_{2,3}d_{3,4}d_{5,6} \\
    \\
    
    -d_{2,3}d_{5,6}(d_{3,4}d_{0,2} + d_{3,5}(d_{1,2} + d_{3,4}))
    - d_{1,2}d_{0,3}d_{3,5}(d_{2,3} + d_{4,5})\\
    \\
    
    0\\
    \\
    
    -d_{0,1}d_{3,4}d_{3,5}(d_{2,3} + d_{4,5}) - d_{4,5}d_{3,6}
    (d_{3,4}(d_{0,2}+d_{3,5}) + d_{1,2}d_{3,5})\\
    \\
    0\\
    \\
    
    d_{0,1}d_{3,4}d_{5,6}d_{3,5} + d_{3,4}d_{5,6}d_{0,2}d_{3,6} + d_{1,2}d_{5,6}d_{3,5}d_{3,6} + d_{3,4}d_{5,6}d_{3,5}d_{3,6} + d_{1,2}d_{0,3}d_{3,6}d_{3,5}
    \end{vmatrix}
\]

Now suppose $X$ is such that $d_{0,1}d_{2,3}d_{3,4}d_{5,6} \leq d_{1,2}d_{4,5}d_{0,3}d_{3,6}$.  We will show that $\mathcal{T}(G',X) = \emptyset$ by using Farkas' Lemma.  In particular, we show that there is a vector $\textbf{z} \in \mathbb{R}^m$ such that $\textbf{z} \leq 0$, $\textbf{A}^\text{T}\textbf{z} \geq 0$, and $\textbf{b}^\text{T}\textbf{z} < 0$ for every $\epsilon > 0$. Our vector $\textbf{z}$ is as follows:

\begin{center}
\[ \textbf{z} = \begin{vmatrix}
    \dfrac{-d_{3,4}d_{5,6}}{d_{1,2}d_{0,3}} \\
    \\
    
    \dfrac{-d_{5,6}}{d_{0,3}} \\
    \\
    
    \dfrac{-d_{3,4}d_{5,6}d_{0,2}}{d_{2,1}d_{0,3}d_{3,5}} \\
    \\
    
    \dfrac{-d_{0,1}d_{3,4}d_{5,6}}{d_{1,2}d_{0,3}d_{3,6}}\\
    \\
    \dfrac{d_{0,1}d_{2,3}d_{3,4}d_{5,6} - d_{1,2}d_{4,5}d_{0,3}d_{3,6}}{d_{1,2}d_{0,3}d_{3,6}d_{3,5}}\\
    \\
    -1
    \end{vmatrix}
    \]
    \end{center}

Note that the next-to-last entry is at most 0 due to the assumption on $X$, and the rest are strictly negative for all $X$.  Therefore it immediately follows that $\textbf{z} \leq 0$ and $\textbf{b}^\text{T}\textbf{z} < 0$ for every $\epsilon > 0$.  We complete the proof by showing that $\textbf{A}^\text{T}\textbf{z}$ is a zero vector (work shown in the appendix).
\end{proof}

\begin{figure}[ht]
\centering

\includegraphics[scale=0.43]{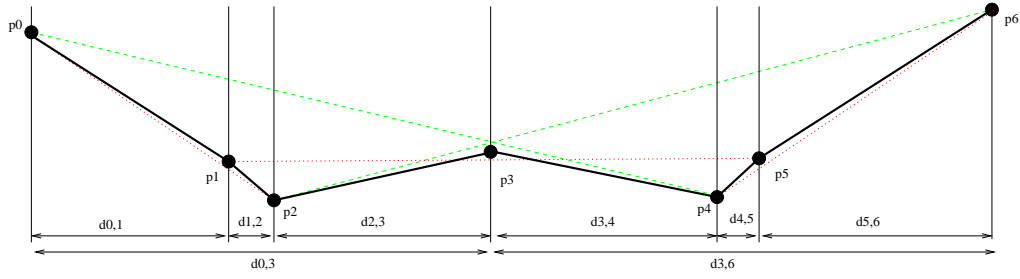}

 \caption{\footnotesize A terrain whose VG is $G'$.
 }
 \label{fig:gPrime}
\end{figure}

We remark that Lemma \ref{lem:mainIneq} can illustrate the difficulty in designing an algorithm that reconstructs the terrain from left to right, placing the points of the terrain one at a time.  Let $G''$ be the subgraph of $G'$ induced by the first six vertices $\{v_0, \ldots, v_5\}$.  It is not hard to see that $G''$ can be reconstructed using any vector of six, increasing x-coordinates.  Suppose we take such a reconstruction and then try to extend the reconstruction to handle all of $G'$.  If we reconstructed $G''$ using, say, $x_i = i$ for each $i \in \{0, \ldots, 5\}$ (implying that $d_{i,i+1} = 1$ for each $i \in \{0, \ldots, 4\}$), one can see that \textit{every} choice of $x_6$ such that $x_6 > x_5$ will violate the inequality stated in Lemma \ref{lem:mainIneq} (note that the choice of $x_6$ impacts the $d_{5,6}$ term on the left side and impacts the $d_{3,6}$ term on the right side).  This implies that a left-to-right style approach may need to shift both the x-coordinates and y-coordinates of the previously-placed points to accommodate the new point.

\section{A persistent graph that is not a terrain visibility graph}
\label{sec:noTerrain}

We are now ready to prove our main result of the paper, that there is a persistent graph $G^*$ such that there is no terrain $T$ such that $G^*$ is the visibility graph of $T$.  The adjacency matrix of $G^*$ is given in Figure \ref{fig:adjMatrix}.  There are 35 vertices in $G^*$, listed from left to right along the ``horizontal axis" of the graph.  The naming convention that we are using in this graph partitions the vertices into five color groups, each color containing seven vertices.  There is green $(g_0,\ldots,g_6)$, red $(r_0, \ldots, r_6)$, blue $(b_0, \ldots, b_6)$, magenta ($m_0, \ldots, m_6)$, and yellow $(y_0, \ldots y_6)$.  The key observation about each of these color classes is that the subgraph of $G^*$ induced by each of the color classes is exactly the graph $G'$ used in Lemma \ref{lem:mainIneq}, and moreover the designated blockers are exactly the same.  For example, $g_1$ must be over $\overline{g_0g_2}$, because $g_0$ doesn't see any point between $g_1$ and $g_2$ (including points of different colors) and $g_2$ doesn't see any point between $g_0$ and $g_1$. This implies that in order to obtain a terrain $T$ that has $G^*$ as its visibility graph, the x-coordinates must be chosen so that each of the 5 color classes satisfy the inequality of Lemma \ref{lem:mainIneq}, and we will show that this is not possible.

\begin{figure}

\centering

\includegraphics[scale=0.37]{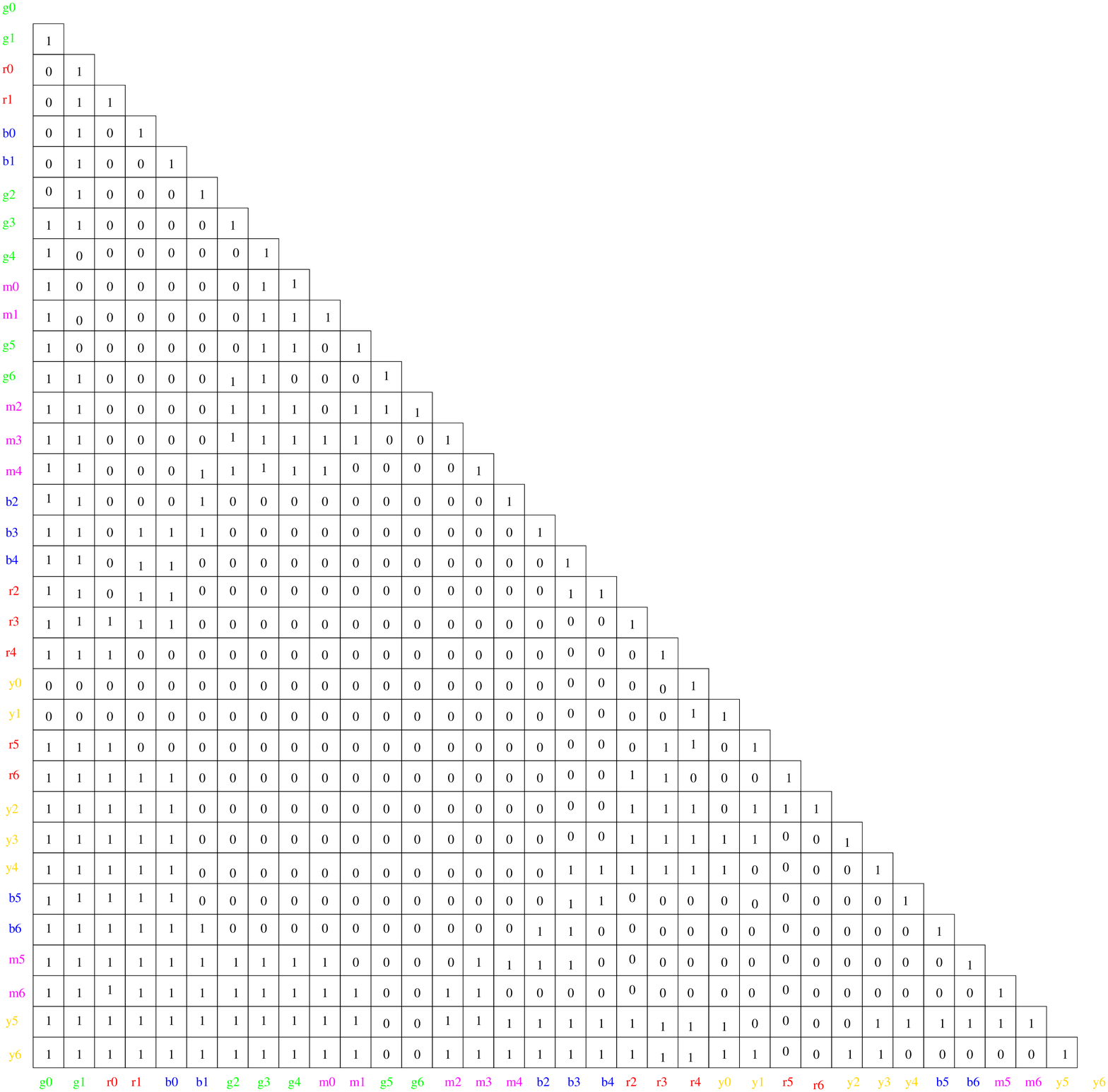}

\caption{\footnotesize The adjacency matrix of $G^*$, a persistent graph that is not a terrain visibility graph. }
\label{fig:adjMatrix}
\end{figure}

Proving that $G^*$ is persistent via a direct proof involves a tedious case analysis, and we instead show it is persistent via a computer program.  The program builds the adjacency matrix as it is shown in Figure \ref{fig:adjMatrix} and then ensures that the graph satisfies both the X-property and the Bar-property.  It can be much more easily verified that the algorithm we used to check the properties is correct than it would be to analyze a direct proof that $G^*$ is persistent.  A copy of the C++ source code we use to perform the check can be found at \url{https://github.com/PySean/GraphChecker}.


The following lemma will be used to prove the main result.

\begin{lemma}
If $X$ satisfies $d_{0,1} d_{2,3} d_{3,4} d_{5,6} > d_{1,2} d_{4,5} d_{0,3} d_{3,6}$, then at least one of the following two statements is true: 1) $d_{1,2} < \min\{d_{0,1},d_{2,3}\}$, or 2) $d_{4,5} < \min\{d_{3,4},d_{5,6}\}$.
\label{lem:helper}
\end{lemma}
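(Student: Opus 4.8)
The plan is to prove the contrapositive: assuming both statements (1) and (2) fail, I will derive $d_{0,1} d_{2,3} d_{3,4} d_{5,6} \le d_{1,2} d_{4,5} d_{0,3} d_{3,6}$, contradicting the hypothesis on $X$. Throughout I use that the $x$-coordinates are strictly increasing, so every $d_{i,j}$ with $i<j$ is strictly positive, together with the identities $d_{0,3} = d_{0,1} + d_{1,2} + d_{2,3}$ and $d_{3,6} = d_{3,4} + d_{4,5} + d_{5,6}$.

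The key step is an elementary inequality: if $\alpha, \beta, \gamma > 0$ and $\beta \ge \min\{\alpha, \gamma\}$, then $\beta(\alpha + \beta + \gamma) > \alpha\gamma$. Indeed, since the claimed inequality is symmetric in $\alpha$ and $\gamma$, I may assume $\beta \ge \alpha$; then $\beta(\alpha + \beta + \gamma) \ge \alpha(\alpha + \beta + \gamma) > \alpha\gamma$, the last inequality using $\alpha + \beta > 0$.

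Now suppose statement (1) is false, i.e.\ $d_{1,2} \ge \min\{d_{0,1}, d_{2,3}\}$. Applying the key step with $(\alpha,\beta,\gamma) = (d_{0,1}, d_{1,2}, d_{2,3})$ yields $d_{1,2}\, d_{0,3} > d_{0,1}\, d_{2,3}$. Symmetrically, if statement (2) is false, i.e.\ $d_{4,5} \ge \min\{d_{3,4}, d_{5,6}\}$, then applying the key step with $(\alpha,\beta,\gamma) = (d_{3,4}, d_{4,5}, d_{5,6})$ yields $d_{4,5}\, d_{3,6} > d_{3,4}\, d_{5,6}$. If both statements are false, both of these hold, and since all quantities involved are strictly positive I may multiply them to get $d_{1,2} d_{4,5} d_{0,3} d_{3,6} > d_{0,1} d_{2,3} d_{3,4} d_{5,6}$, contradicting the assumption that $X$ satisfies $d_{0,1} d_{2,3} d_{3,4} d_{5,6} > d_{1,2} d_{4,5} d_{0,3} d_{3,6}$. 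Hence at least one of (1), (2) must hold.

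I do not expect a genuine obstacle here; the argument is short. The only points needing a little care are (a) keeping the inequalities \emph{strict}, which is exactly where strict monotonicity of the $x$-coordinates (hence $d_{0,3} > d_{2,3}$, $d_{0,3} > d_{0,1}$, etc.) is used, and (b) correctly unpacking the disjunction ``$d_{1,2} \ge d_{0,1}$ or $d_{1,2} \ge d_{2,3}$'' that results from negating the minimum — this is handled cleanly by the symmetry of $d_{0,1} d_{2,3}$ in its two factors, which justifies the WLOG in the key step.
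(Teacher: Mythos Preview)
Your proof is correct and follows essentially the same approach as the paper: both argue by contrapositive, using $d_{2,3} < d_{0,3}$, $d_{3,4} < d_{3,6}$, and the assumed lower bounds on $d_{1,2}$ and $d_{4,5}$ to reverse the inequality. The only difference is organizational: the paper breaks the negation of (1) and (2) into four cases and treats two of them explicitly, whereas you abstract the common computation into a single symmetric lemma $\beta(\alpha+\beta+\gamma) > \alpha\gamma$ whenever $\beta \ge \min\{\alpha,\gamma\}$, which collapses the case analysis into one line.
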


\begin{proof}
Suppose without loss of generality that $X$ is such that $d_{1,2} \geq d_{0,1}$ and $d_{4,5} \geq d_{5,6}$.  We will show that $d_{0,1} d_{2,3} d_{3,4} d_{5,6} < d_{1,2} d_{4,5} d_{0,3} d_{3,6}$.  We have:

\begin{align*}
 d_{0,1} d_{2,3} d_{3,4} d_{5,6} &\leq d_{1,2}d_{2,3}d_{3,4}d_{4,5}\\
 &<d_{1,2}d_{0,3}d_{3,6}d_{4,5}\\
 &=d_{1,2} d_{4,5} d_{0,3} d_{3,6}
\end{align*}

Note that the second inequality follows since for all $X$ we have $x_0 < x_2 < x_3$ implying $d_{2,3} < d_{0,3}$, and similarly we have $d_{3,4} < d_{3,6}$.  

The lemma follows by applying a similar analysis for the other 3 cases.  For example, if $d_{1,2} \geq d_{2,3}$ and $d_{4,5} \geq d_{3,4}$ then we'd have:

\begin{align*}
 d_{0,1} d_{2,3} d_{3,4} d_{5,6} &\leq d_{0,1}d_{1,2}d_{4,5}d_{5,6}\\
 &<d_{0,3}d_{1,2}d_{4,5}d_{3,6}\\
 &=d_{1,2} d_{4,5} d_{0,3} d_{3,6}
\end{align*}
\end{proof}

For any color $c$ from our set of colors $\{g,r,b,m,y\}$ and any pair of distinct integers $i,j \in \{0,\ldots, 6\}$ such that $i<j$, we let $d^c_{i,j}$ denote the absolute value of the difference of x-coordinates of $c_i$ and $c_j$.  For example, $d^m_{2,3}$ is the absolute value of the difference of x-coordinates of $m_2$ and $m_3$.  We next show that for any vector $X$ of thirty-five, increasing x-coordinates, at least one color class has to violate the inequality from Lemma \ref{lem:mainIneq}.

\begin{lemma}
Let $X$ be any vector of 35 x-coordinates in increasing order.  There is at least one color $c \in \{g,r,b,m,y\}$ such that the x-coordinates for the seven points of that color do not satisfy $d^c_{0,1} d^c_{2,3} d^c_{3,4} d^c_{5,6} > d^c_{1,2} d^c_{4,5} d^c_{0,3} d^c_{3,6}$.
\label{lem:noX}
\end{lemma}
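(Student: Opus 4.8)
The plan is to argue by contradiction: suppose $X$ is a vector of $35$ increasing x-coordinates for which \emph{every} one of the five color classes satisfies the inequality of Lemma~\ref{lem:mainIneq}, and derive a contradiction purely from inequalities between consecutive x-gaps.

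The first step is to compress each color class into a single combinatorial consequence via Lemma~\ref{lem:helper}. For every color $c \in \{g,r,b,m,y\}$, since its seven x-coordinates satisfy $d^c_{0,1} d^c_{2,3} d^c_{3,4} d^c_{5,6} > d^c_{1,2} d^c_{4,5} d^c_{0,3} d^c_{3,6}$, Lemma~\ref{lem:helper} says that $c$ is \emph{left-pinched} (meaning $d^c_{1,2} < \min\{d^c_{0,1}, d^c_{2,3}\}$) or \emph{right-pinched} (meaning $d^c_{4,5} < \min\{d^c_{3,4}, d^c_{5,6}\}$). With five colors and two labels, the pigeonhole principle produces three colors carrying the same label. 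Because the graph $G'$ together with the positions of its designated blockers is invariant under the reversal $i \mapsto 6-i$ (so the inequality of Lemma~\ref{lem:mainIneq} is itself symmetric under that reversal, and ``left-pinched'' and ``right-pinched'' swap), and $G^*$ is assembled symmetrically from five copies of $G'$, the situation with three right-pinched colors is the mirror image of the one with three left-pinched colors. Hence I may assume three colors, say $A$, $B$, $C$ in left-to-right order along the horizontal axis, are all left-pinched.

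The second step is the combinatorial core. I would read off from the adjacency matrix in Figure~\ref{fig:adjMatrix} the left-to-right order of the twelve relevant vertices $A_0,A_1,A_2,A_3,B_0,B_1,B_2,B_3,C_0,C_1,C_2,C_3$, introduce a positive variable $e_t$ for each gap between consecutive vertices of this sub-list, and rewrite each of $d^A_{0,1},d^A_{1,2},d^A_{2,3},d^B_{0,1},\dots,d^C_{2,3}$ as the sum of a contiguous block of the $e_t$'s. The six left-pinched inequalities for $A$, $B$, $C$ then become six strict linear inequalities in the $e_t$'s. Since the five copies of $G'$ in $G^*$ are interleaved precisely so that the ``middle'' part of each copy straddles the adjacent copies, these six inequalities chain together into an impossibility, for instance a strict cyclic chain $e_t < \cdots < e_t$ or, more generally, $\sigma < \sigma$ for a nonnegative combination $\sigma$ of the $e_t$'s. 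This contradicts the assumed feasibility of $X$ and proves the lemma.

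The step I expect to be the main obstacle is verifying that the particular ordering realized by $G^*$ really does make those six linear inequalities jointly infeasible, and pinning down the exact nonnegative combination of them that collapses to a contradiction; this is where the specific way the five copies of $G'$ are shuffled together does its work, and it comes down to the bookkeeping of which $G'$-vertex of which color sits between which two others along the axis. A further subtlety is that the pigeonhole step does not identify \emph{which} three colors are left-pinched, so the infeasibility has to be established for every triple of colors that could occur — a uniformity that the construction is engineered to guarantee and that must be confirmed as part of the argument.
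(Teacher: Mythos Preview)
Your pigeonhole reduction to three same-pinched colors does not work for this construction. Consider the triple $\{g,m,y\}$ all left-pinched. The vertices $c_0,c_1,c_2,c_3$ for these three colors sit at positions $\{0,1,6,7\}$, $\{9,10,13,14\}$, and $\{22,23,26,27\}$ respectively---three pairwise disjoint blocks along the axis. Consequently the six left-pinched inequalities $d^c_{1,2}<d^c_{0,1}$, $d^c_{1,2}<d^c_{2,3}$ for $c\in\{g,m,y\}$ involve disjoint sets of consecutive gaps and are trivially jointly satisfiable; no nonnegative combination of them can collapse to $\sigma<\sigma$. The same failure occurs for several other triples (e.g.\ $\{r,b,m\}$, $\{r,m,y\}$), so the ``uniformity that the construction is engineered to guarantee'' is simply not there. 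In fact the only pairs whose left-pinched constraints are mutually inconsistent are $(g,r)$ and $(g,b)$, which is far from covering all $\binom{5}{3}$ triples.

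The paper's proof is not a three-color argument but a five-color cascade. It starts at blue and, using the specific interleaving $g_1<b_0<b_1<g_2<g_3<b_2$, shows that one half of blue's pinch forces $d^g_{2,3}<d^g_{1,2}$, which rules out green being left-pinched; Lemma~\ref{lem:helper} then forces green to be right-pinched. Green's right-pinch, via the interleaving $g_4<m_0<m_1<g_5<g_6<m_2$, forces magenta to be right-pinched; magenta's right-pinch forces red; and finally magenta and red together force both halves of yellow's pinch to fail, contradicting Lemma~\ref{lem:helper} for yellow. Each step transfers information from one color's $\{3,4,5,6\}$ block to the next color's $\{0,1,2,3\}$ block (or vice versa), and all five colors are genuinely needed. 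Your approach discards the two ``minority'' colors, and with them the links that make the chain close.
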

\begin{proof}
If blue does not satisfy the inequality then we are done, so suppose that blue does satisfy it.  Then according to Lemma \ref{lem:helper}, it must be that either $d^b_{1,2} < d^b_{0,1}$ or $d^b_{4,5} < d^b_{5,6}$.  Without loss of generality, suppose that $d^b_{1,2} < d^b_{0,1}$.  

Now consider the green points.  If green does not satisfy the inequality then we are done, so suppose it does.  Since $g_1 < b_0 < b_1 < g_2 < g_3 < b_2$ and $d^b_{1,2} < d^b_{0,1},$ we must have that $d^g_{2,3} < d^g_{1,2}$.  Then by Lemma \ref{lem:helper} we have that that $d^g_{4,5} < d^g_{5,6}$.

Now consider the magenta points.  If magenta does not satisfy the inequality then we are done, so suppose it does.  Since $g_4 < m_0 < m_1 < g_5 < g_6 < m_2$ and $d^g_{4,5} < d^g_{5,6}$, we have that $d^m_{0,1} < d^m_{1,2}$.  Therefore if magenta satisfies the inequality then we have $d^m_{4,5} < d^m_{3,4}$ and $d^m_{4,5} < d^m_{5,6}$ by Lemma \ref{lem:helper}.  

Now consider the red points.  If red does not satisfy the inequality then we are done, so suppose it does.  Since $r_1 < m_3 < m_4 < r_2 < r_3 < m_5$ and $d^m_{4,5} < d^m_{3,4}$, we have $d^r_{2,3} < d^r_{1,2}$.  Then by Lemma \ref{lem:helper}, we must have that $d^r_{4,5} < d^r_{5,6}$.

Now consider the yellow points.  Since $m_4 < y_3 < y_4 < m_5 < m_6 < y_5$ and $d^m_{4,5} < d^m_{5,6}$ then it must be that $d^y_{3,4} < d^y_{4,5}$.  Since $r_4 < y_0 < y_1 < r_5 < r_6 < y_2$ and $d^r_{4,5} < d^r_{5,6}$, we also have that $d^y_{0,1} < d^y_{1,2}$.  Then by Lemma \ref{lem:helper} we have that yellow must violate the inequality.
\end{proof}

We now show that $G^*$ is not the visibility graph for any terrain, proving Theorem \ref{thm:mainResult}.

\begin{lemma}
For any choice $X$ of thirty-five, increasing x-coordinates, $\mathcal{T}(G^*,X) = \emptyset$.
\end{lemma}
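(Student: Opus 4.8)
The plan is to chain together the structural property of $G^*$ with Lemmas~\ref{lem:mainIneq} and~\ref{lem:noX}. I would argue by contradiction: suppose $\mathcal{T}(G^*,X)\neq\emptyset$ for some vector $X$ of thirty-five increasing x-coordinates, and fix a terrain $T\in\mathcal{T}(G^*,X)$ with y-coordinate vector $\textbf{y}$. As explained in Section~\ref{sec:reconstruct}, $\textbf{y}$ is then a feasible solution of the LP $\textbf{Ay}\ge\textbf{b}$ associated with $G^*$.

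The first step is to show that, for each color $c\in\{g,r,b,m,y\}$, the six inequalities of the LP of Lemma~\ref{lem:mainIneq} appear among the rows of $\textbf{Ay}\ge\textbf{b}$ after the relabeling $v_i\mapsto c_i$ and $d_{i,j}\mapsto d^c_{i,j}$. This is where the defining property of $G^*$ is used: the subgraph induced on $\{c_0,\dots,c_6\}$ is exactly $G'$ and the designated blockers are unchanged, so each non-edge of $G'$ inside the color class still has the $G'$-prescribed color-$c$ vertex as its designated blocker (yielding the three blocking constraints~(1),~(3),~(6)), and each edge of $G'$ inside the color class still contributes, for every color-$c$ vertex strictly between its endpoints, the corresponding visibility constraint (yielding~(2),~(4),~(5), alongside other constraints that only shrink the feasible region further). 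Each of these six constraints involves only the y-variables of color $c$, and its coefficients depend only on the x-coordinates of $c_0,\dots,c_6$; call this seven-tuple $X^c$.

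Given this, I would finish as follows. Restricting $\textbf{y}$ to the coordinates of color $c$ yields a feasible solution to the LP of Lemma~\ref{lem:mainIneq} with parameter $X^c$ (the specific positive value of $\epsilon$ on the right-hand side is irrelevant to feasibility), so by Lemma~\ref{lem:lpFeas} we obtain $\mathcal{T}(G',X^c)\neq\emptyset$, and then Lemma~\ref{lem:mainIneq} forces $d^c_{0,1} d^c_{2,3} d^c_{3,4} d^c_{5,6} > d^c_{1,2} d^c_{4,5} d^c_{0,3} d^c_{3,6}$. Since this inequality would then hold simultaneously for all five colors, it contradicts Lemma~\ref{lem:noX}. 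Hence no such $T$ exists and $\mathcal{T}(G^*,X)=\emptyset$. I expect the only genuinely delicate point to be the first step --- confirming that inserting the other four color classes alters neither the designated blockers nor the relevant visibility constraints of a given color class --- but this is precisely the observation already recorded when $G^*$ was introduced (e.g.\ ``$g_1$ must be over $\overline{g_0g_2}$ because $g_0$ does not see any point between $g_1$ and $g_2$, including points of different colors''), and I would discharge it by appealing directly to the adjacency matrix of Figure~\ref{fig:adjMatrix} rather than re-deriving all thirty-five rows.
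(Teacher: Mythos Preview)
Your proposal is correct and follows essentially the same approach as the paper: both arguments hinge on verifying that the six constraints of Lemma~\ref{lem:mainIneq} survive inside the LP for $G^*$ for each color class (with the same designated blockers), and then combine Lemmas~\ref{lem:mainIneq} and~\ref{lem:noX}. The only cosmetic difference is that the paper argues directly via Farkas' Lemma---it takes the six-dimensional certificate $\textbf{z}$ for the violating color and pads it with zeros to obtain a certificate for the full system---whereas you take the contrapositive, restricting a putative primal solution $\textbf{y}$ to the seven color-$c$ coordinates and invoking Lemma~\ref{lem:lpFeas}; these are dual formulations of the same idea.
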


\begin{proof}
By Lemma \ref{lem:noX}, there must be at least one color that does not satisfy the inequality from Lemma \ref{lem:mainIneq}.  Arbitrarily pick one such color with a violated inequality, and let $c$ denote our choice. 

Let $\textbf{A}$ be the constraint matrix generated by our reconstruction approach for $G^*$.  Note that for each of the 6 constraints that we used in the proof of Lemma \ref{lem:mainIneq}, we must have a similar set of constraints for the points of color $c$ here, namely: (1) $p^c_1$ should be above $\overline{p^c_0p^c_2}$, (2) $p^c_3$ should be under $\overline{p^c_0p^c_4}$, (3) $p^c_3$ should be over $\overline{p^c_1p^c_5}$, (4) $p^c_3$ should be under $\overline{p^c_2p^c_6}$, (5) $p^c_5$ should be under $\overline{p^c_3p^c_6}$, and (6) $p^c_5$ should be over $\overline{p^c_4p^c_6}$. The ``under" constraints clearly must be satisfied, but it is not immediately clear that the ``over" constraints must be satisfied: it must be verified that, for example, $p^c_1$ is a designated blocker for $p^c_0$ and $p^c_2$ (for example, $p^c_0$ shouldn't see any points of any color between $p^c_1$ and $p^c_2$).  One can easily verify that this is the case for $G^*$ for each of the ``over" constraints for each of the color classes.

We then prove that $\mathcal{T}(G^*,X) = \emptyset$ using Farkas' Lemma.  That is, we show the existence of a vector $\textbf{z}$ such that $\textbf{z} \leq 0$, $\textbf{A}^\text{T}\textbf{z} \geq 0$, and $\textbf{b}^\text{T}\textbf{z} < 0$ for every $\epsilon > 0$. Note that each entry in $\textbf{z}$ corresponds with one of the constraints of $A$.  We can simply pick our $\textbf{z}$ by allowing each of the entries in $\textbf{z}$ that correspond with one of the six constraints associated with the vertices of color $c$ to take the same value as the corresponding entry in our vector in the proof of Lemma \ref{lem:mainIneq}.  We set every other entry of $\textbf{z}$ to be 0.  The analysis to see that this vector satisfies the conditions of Case 2 of Farkas' Lemma is then identical to that of the proof of Lemma \ref{lem:mainIneq}, completing the proof of this lemma.
\end{proof}


\section{Conclusions and open problems}
\label{sec:conc}
The visibility graphs of terrains have been studied for almost 30 years, and it was known that the visibility graph for any terrain must be persistent.  Previous works tended to believe that persistence formed a characterization of the visibility graphs of terrains, that is that for any persistent graph $G$, there is a terrain $T$ such that $G$ is the visibility graph of $T$.  Our main result in this paper is to show the existence of a persistent graph that is \textit{not} the visibility graph for any terrain.  This proves that pseudo-terrains are not stretchable (as every persistent graph is the visibility graph for some pseudo-terrain).  

There is much left to be determined about the visibility graphs of terrains.  This paper re-opens the question about obtaining a characterization of the visibility graphs of terrains.  We now have that the X-property and Bar-properties are necessary but not sufficient properties for a graph to be the visibility graph of a terrain.  What additional properties must the graph satisfy?  We believe our linear programming approach to reconstructing terrains can shed some light on the reconstruction problem as well.  Previous research attempted to perform an iterative placement of points from left to right.  Our work shows that one needs not be concerned with the y-coordinates of points when reconstructing a terrain, as if one has a set of feasible x-coordinates then the y-coordinates can be computed in polynomial time using linear programming.  Given a visibility graph for a terrain, is there a polynomial-time algorithm that can compute such a set of x-coordinates?

\bibliographystyle{plain}
\bibliography{proposal.bib}

\begin{thebibliography}{10}

\bibitem{abello2004majority}
James Abello.
\newblock The majority rule and combinatorial geometry (via the symmetric
  group), 2004.

\bibitem{abello1993visibility}
James Abello and Ömer Eğecioğlu.
\newblock Visibility graphs of staircase polygons with uniform step length.
\newblock {\em International Journal of Computational Geometry \&
  Applications}, 3(01):27--37, 1993.

\bibitem{abello1995visibility}
James Abello, Ömer Eğecioğlu, and Krishna Kumar.
\newblock Visibility graphs of staircase polygons and the weak bruhat order, i:
  from visibility graphs to maximal chains.
\newblock {\em Discrete \& Computational Geometry}, 14(3):331--358, 1995.

\bibitem{abello1993combinatorial}
James Abello, Krishna Kumar, and Ömer Eğecioğlu.
\newblock A combinatorial view of visibility graphs of simple polygons.
\newblock In {\em Proceedings of ICCI'93: 5th International Conference on
  Computing and Information}, pages 87--92. IEEE, 1993.

\bibitem{alp1992}
James Abello, Hua Lin, and Sekhar Pisupati.
\newblock On visibility graphs of simple polygons.
\newblock {\em Congressus Numerantium}, 90:119--128, 1992.

\bibitem{pseudoRecog}
Safwa Ameer, Matt Gibson, Erik Krohn, and Qing Wang.
\newblock Recognizing and reconstructing pseudo-polygons from their visibility
  graphs.
\newblock {\em Manuscript}, 2020.

\bibitem{ChoiSC95}
Seung{-}Hak Choi, Sung~Yong Shin, and Kyung{-}Yong Chwa.
\newblock Characterizing and recognizing the visibility graph of a
  funnel-shaped polygon.
\newblock {\em Algorithmica}, 14(1):27--51, 1995.

\bibitem{corke2011robotics}
Peter Corke.
\newblock {\em Robotics, vision and control: fundamental algorithms in MATLAB},
  volume~73.
\newblock Springer Science \& Business Media, 2011.

\bibitem{EvansS15}
William~S. Evans and Noushin Saeedi.
\newblock On characterizing terrain visibility graphs.
\newblock {\em JoCG}, 6(1):108--141, 2015.

\bibitem{EverettC95}
Hazel Everett and Derek~G. Corneil.
\newblock Negative results on characterizing visibility graphs.
\newblock {\em Comput. Geom.}, pages 51--63, 1995.

\bibitem{Everett1990}
Hazel Jane~Margaret Everett.
\newblock {\em Visibility graph recognition}.
\newblock PhD thesis, University of Toronto, 1990.

\bibitem{Ghosh1988}
Subir~Kumar Ghosh.
\newblock On recognizing and characterizing visibility graphs of simple
  polygons.
\newblock In {\em SWAT}, pages 96--104, 1988.

\bibitem{Ghosh97}
Subir~Kumar Ghosh.
\newblock On recognizing and characterizing visibility graphs of simple
  polygons.
\newblock {\em Discrete {\&} Computational Geometry}, 17(2):143--162, 1997.

\bibitem{gkw15}
Matt Gibson, Erik Krohn, and Qing Wang.
\newblock A characterization of visibility graphs for pseudo-polygons.
\newblock In {\em ESA}, pages 607--618, 2015.

\bibitem{Lasserre200467}
Jean~B. Lasserre.
\newblock A discrete farkas lemma.
\newblock {\em Discrete Optimization}, 1(1):67 -- 75, 2004.

\bibitem{lozano1979algorithm}
Tom{\'a}s Lozano-P{\'e}rez and Michael~A Wesley.
\newblock An algorithm for planning collision-free paths among polyhedral
  obstacles.
\newblock {\em Communications of the ACM}, 22(10):560--570, 1979.

\bibitem{niku2001introduction}
Saeed~B Niku.
\newblock {\em Introduction to robotics: analysis, systems, applications},
  volume~7.
\newblock Prentice Hall New Jersey, 2001.

\bibitem{ORourkeS97}
Joseph O'Rourke and Ileana Streinu.
\newblock Vertex-edge pseudo-visibility graphs: Characterization and
  recognition.
\newblock In {\em Symposium on Computational Geometry}, pages 119--128, 1997.

\bibitem{ORourke1998105}
Joseph O'Rourke and Ileana Streinu.
\newblock The vertex-edge visibility graph of a polygon.
\newblock {\em Computational Geometry}, 10(2):105 -- 120, 1998.

\bibitem{SrinivasaraghavanM94}
G.~Srinivasaraghavan and Asish Mukhopadhyay.
\newblock A new necessary condition for the vertex visibility graphs of simple
  polygons.
\newblock {\em Discrete {\&} Computational Geometry}, 12:65--82, 1994.

\bibitem{Streinu05}
Ileana Streinu.
\newblock Non-stretchable pseudo-visibility graphs.
\newblock {\em Comput. Geom.}, 31(3):195--206, 2005.

\end{thebibliography}

\appendix


\section{Linear algebra proofs for Lemma \ref{lem:mainIneq}}
\label{sec:linearAlgebra}

Recall that in Lemma \ref{lem:mainIneq}, we argue that there is a terrain in $\mathcal{T}(G',X)$ if and only if $X$ is such that $d_{0,1}d_{2,3}d_{3,4}d_{5,6} > d_{1,2}d_{4,5}d_{0,3}d_{3,6}$.  In this section, we show that $\textbf{Ay} \geq \textbf{b}$ when $d_{0,1}d_{2,3}d_{3,4}d_{5,6} > d_{1,2}d_{4,5}d_{0,3}d_{3,6}$, and we show $\textbf{A}^\text{T}\textbf{z} = \textbf{0}$ when $d_{0,1}d_{2,3}d_{3,4}d_{5,6} \leq d_{1,2}d_{4,5}d_{0,3}d_{3,6}$ as claimed in the proof of Lemma \ref{lem:mainIneq}.

\subsection{$\textbf{Ay} \geq \textbf{b}$}
\label{sec:primal}

In this subsection we are assuming $d_{0,1}d_{2,3}d_{3,4}d_{5,6} > d_{1,2}d_{4,5}d_{0,3}d_{3,6}$.  Recall that $\textbf{A}, \textbf{y}$, and $\textbf{b}$ are defined as follows:

\begin{center}
\begin{tabular}{c c}
$\textbf{A} = 
\begin{vmatrix}
-d_{1,2} & d_{0,2} & -d_{0,1}  & 0   & 0 & 0 & 0 \\
d_{3,4} & 0          & 0    & -d_{0,4}      & d_{0,3}     & 0 & 0\\
0    & -d_{3,5}    & 0         & d_{1,5}    & 0    & -d_{1,3} & 0   \\
0     & 0 & d_{3,6} & -d_{2,6} & 0        & 0     & d_{2,3}\\
0     & 0    & 0         & d_{5,6}   & 0   & -d_{3,6} & d_{3,5}   \\
0     & 0          & 0    & 0          & -d_{5,6} & d_{4,6} & -d_{4,5}
\end{vmatrix}  $

&
$\textbf{b} = 
\begin{vmatrix}
\epsilon \\
\epsilon \\
\epsilon \\
\epsilon \\
\epsilon \\
\epsilon 
\end{vmatrix}  $ \\[40pt]


\multicolumn{2}{c}{$\textbf{y} =
    \begin{vmatrix}
    d_{0,1}d_{2,3}d_{3,5}d_{5,6} + d_{0,3}d_{3,5}d_{0,1}d_{2,3} + 
    d_{0,3}d_{3,5}d_{0,1}d_{4,5} +
    d_{4,5}d_{0,3}d_{0,2}d_{3,6} +
    d_{0,3}d_{3,5}d_{4,5}d_{3,6}\\
    \\
    
    d_{1,2}d_{4,5}d_{0,3}d_{3,6} - d_{0,1}d_{2,3}d_{3,4}d_{5,6} \\
    \\
    
    -d_{2,3}d_{5,6}(d_{3,4}d_{0,2} + d_{3,5}(d_{1,2} + d_{3,4}))
    - d_{1,2}d_{0,3}d_{3,5}(d_{2,3} + d_{4,5})\\
    \\
    
    0\\
    \\
    
    -d_{0,1}d_{3,4}d_{3,5}(d_{2,3} + d_{4,5}) - d_{4,5}d_{3,6}
    (d_{3,4}(d_{0,2}+d_{3,5}) + d_{1,2}d_{3,5})\\
    \\
    0\\
    \\
    
    d_{0,1}d_{3,4}d_{5,6}d_{3,5} + d_{3,4}d_{5,6}d_{0,2}d_{3,6} + d_{1,2}d_{5,6}d_{3,5}d_{3,6} + d_{3,4}d_{5,6}d_{3,5}d_{3,6} + d_{1,2}d_{0,3}d_{3,6}d_{3,5}
    \end{vmatrix}   $}
\end{tabular}
\end{center}

Here, $\epsilon$ is the minimum of $d_{3,5}(d_{0,1}d_{2,3}d_{3,4}d_{5,6} - d_{1,2}d_{4,5}d_{0,3}d_{3,6})$ and $d_{3,5}(d_{0,1}d_{3,4}d_{5,6}d_{3,5} + d_{3,4}d_{5,6}d_{0,2}d_{3,6}
    + d_{1,2}d_{5,6}d_{3,5}d_{3,6} + d_{3,4}d_{5,6}d_{3,5}d_{3,6}
    + d_{1,2}d_{0,3}d_{3,6}d_{3,5})$.  Due to our assumption on $X$, $\epsilon$ is always strictly positive and therefore showing $\textbf{Ay} \geq \textbf{b}$ implies that the visibility graph of $T(X,\textbf{y})$ is $G'$ by Lemma \ref{lem:lpFeas}.
    
In this section, we let $R_1, \ldots, R_6$ denote the rows of $A$.  We show the result of $R_i\textbf{y} = d_{3,5}(d_{0,1}d_{2,3}d_{3,4}d_{5,6} - d_{1,2}d_{4,5}d_{0,3}d_{3,6})$ for each $i \neq 5$, and we show that $R_5\textbf{y} = d_{3,5}(d_{0,1}d_{3,4}d_{5,6}d_{3,5} + d_{3,4}d_{5,6}d_{0,2}d_{3,6}
    + d_{1,2}d_{5,6}d_{3,5}d_{3,6} + d_{3,4}d_{5,6}d_{3,5}d_{3,6}
    + d_{1,2}d_{0,3}d_{3,6}d_{3,5})$.  This implies that $R_i\textbf{y} \geq \epsilon$ for all rows.
    
\begin{align*}
 R_1\textbf{y} &= -d_{1,2}[d_{0,1}d_{2,3}d_{3,5}d_{5,6} + d_{0,3}d_{3,5}d_{0,1}d_{2,3} + 
    d_{0,3}d_{3,5}d_{0,1}d_{4,5} +
    d_{4,5}d_{0,3}d_{0,2}d_{3,6} +
    d_{0,3}d_{3,5}d_{4,5}d_{3,6}]\\ &+ d_{0,2}[d_{1,2}d_{4,5}d_{0,3}d_{3,6} - d_{0,1}d_{2,3}d_{3,4}d_{5,6}] \\&- d_{0,1}[-d_{2,3}d_{5,6}(d_{3,4}d_{0,2} + d_{3,5}(d_{1,2} + d_{3,4}))
    - d_{1,2}d_{0,3}d_{3,5}(d_{2,3} + d_{4,5})]\\[20pt]
    &= -d_{1,2}d_{0,1}d_{2,3}d_{3,5}d_{5,6} - d_{1,2}d_{0,3}d_{3,5}d_{0,1}d_{2,3} - d_{1,2}d_{0,3}d_{3,5}d_{0,1}d_{4,5} - d_{1,2}d_{4,5}d_{0,3}d_{0,2}d_{3,6}\\ &- d_{1,2}d_{0,3}d_{3,5}d_{4,5}d_{3,6} - d_{0,2}d_{0,1}d_{2,3}d_{3,4}d_{5,6} + d_{0,2}d_{1,2}d_{4,5}d_{0,3}d_{3,6} + d_{0,1}d_{2,3}d_{5,6}d_{3,4}d_{0,2}\\ &+ 
    d_{0,1}d_{2,3}d_{5,6}d_{3,5}d_{1,2} +
    d_{0,1}d_{2,3}d_{5,6}d_{3,5}d_{3,4}+
    d_{0,1}d_{1,2}d_{0,3}d_{3,5}d_{2,3} +
    d_{0,1}d_{1,2}d_{0,3}d_{3,5}d_{4,5}\\[20pt]
    &= -d_{1,2}d_{0,3}d_{3,5}d_{4,5}d_{3,6} + d_{0,1}d_{2,3}d_{5,6}d_{3,5}d_{3,4}\\[20pt]
    &=d_{3,5}(d_{0,1}d_{2,3}d_{3,4}d_{5,6} - d_{1,2}d_{4,5}d_{0,3}d_{3,6})
\end{align*}

\begin{align*}
    R_2 \textbf{y} &= d_{3,4}[d_{0,1}d_{2,3}d_{3,5}d_{5,6} + d_{0,3}d_{3,5}d_{0,1}d_{2,3} + 
    d_{0,3}d_{3,5}d_{0,1}d_{4,5} +
    d_{4,5}d_{0,3}d_{0,2}d_{3,6} +
    d_{0,3}d_{3,5}d_{4,5}d_{3,6}] - d_{0,4}\cdot 0\\ &+ d_{0,3}[-d_{0,1}d_{3,4}d_{3,5}(d_{2,3} + d_{4,5}) - d_{4,5}d_{3,6}
    (d_{3,4}(d_{0,2}+d_{3,5}) + d_{1,2}d_{3,5})]\\[20pt]
    &=d_{3,4}d_{0,1}d_{2,3}d_{3,5}d_{5,6} + d_{3,4}d_{0,3}d_{3,5}d_{0,1}d_{2,3} +
    d_{3,4}d_{0,3}d_{3,5}d_{0,1}d_{4,5} +
    d_{3,4}d_{4,5}d_{0,3}d_{0,2}d_{3,6}\\ &+
    d_{3,4}d_{0,3}d_{3,5}d_{4,5}d_{3,6} 
    - d_{0,3}d_{0,1}d_{3,4}d_{3,5}d_{2,3} - d_{0,3}d_{0,1}d_{3,4}d_{3,5}d_{4,5} - d_{0,3}d_{4,5}d_{3,6}d_{3,4}d_{0,2}\\ &- d_{0,3}d_{4,5}d_{3,6}d_{3,4}d_{3,5} - d_{0,3}d_{4,5}d_{3,6}d_{1,2}d_{3,5}\\[20pt]
    &=d_{3,4}d_{0,1}d_{2,3}d_{3,5}d_{5,6} - d_{0,3}d_{4,5}d_{3,6}d_{1,2}d_{3,5}\\[20pt]
    &=d_{3,5}(d_{0,1}d_{2,3}d_{3,4}d_{5,6} - d_{1,2}d_{4,5}d_{0,3}d_{3,6})
\end{align*}

\begin{align*}
    R_3\textbf{y} &= -d_{3,5}[d_{1,2}d_{4,5}d_{0,3}d_{3,6} - d_{0,1}d_{2,3}d_{3,4}d_{5,6}] + d_{2,5}\cdot 0 - d_{1,3}\cdot 0\\[20pt]
    &= d_{3,5}(d_{0,1}d_{2,3}d_{3,4}d_{5,6} - d_{1,2}d_{4,5}d_{0,3}d_{3,6})
\end{align*}

\begin{align*}
    R_4  \textbf{y} &= d_{3,6}[-d_{2,3}d_{5,6}(d_{3,4}d_{0,2} + d_{3,5}(d_{1,2} + d_{3,4}))
    - d_{1,2}d_{0,3}d_{3,5}(d_{2,3} + d_{4,5})] - d_{2,6}\cdot 0\\ &+ d_{2,3}[d_{0,1}d_{3,4}d_{5,6}d_{3,5} + d_{3,4}d_{5,6}d_{0,2}d_{3,6} + d_{1,2}d_{5,6}d_{3,5}d_{3,6} + d_{3,4}d_{5,6}d_{3,5}d_{3,6} + d_{1,2}d_{0,3}d_{3,6}d_{3,5}]\\[20pt]
    &= d_{2,3}d_{0,1}d_{3,4}d_{5,6}d_{3,5} + d_{2,3}d_{3,4}d_{5,6}d_{0,2}d_{3,6} + d_{2,3}d_{1,2}d_{5,6}d_{3,5}d_{3,6} +
    d_{2,3}d_{3,4}d_{5,6}d_{3,5}d_{3,6}\\ & +
    d_{2,3}d_{1,2}d_{0,3}d_{3,6}d_{3,5}
    -d_{3,6}d_{2,3}d_{5,6}d_{3,4}d_{0,2} - d_{3,6}d_{2,3}d_{5,6}d_{3,5}d_{1,2} - d_{3,6}d_{2,3}d_{5,6}d_{3,5}d_{3,4}\\& - d_{3,6}d_{1,2}d_{0,3}d_{3,5}d_{2,3} - d_{3,6}d_{1,2}d_{0,3}d_{3,5}d_{4,5}\\[20pt]
    &=d_{2,3}d_{0,1}d_{3,4}d_{5,6}d_{3,5} - d_{3,6}d_{1,2}d_{0,3}d_{3,5}d_{4,5}\\[20pt]
    &= d_{3,5}(d_{0,1}d_{2,3}d_{3,4}d_{5,6} - d_{1,2}d_{4,5}d_{0,3}d_{3,6})
\end{align*}

\begin{align*}
    R_5  \textbf{y} &=
    d_{5,6}\cdot 0 - d_{3,6}\cdot 0 \\&+ d_{3,5}[ d_{0,1}d_{3,4}d_{5,6}d_{3,5} + d_{3,4}d_{5,6}d_{0,2}d_{3,6} + d_{1,2}d_{5,6}d_{3,5}d_{3,6} + d_{3,4}d_{5,6}d_{3,5}d_{3,6} + d_{1,2}d_{0,3}d_{3,6}d_{3,5}]\\[20pt]
    &= d_{3,5}( d_{0,1}d_{3,4}d_{5,6}d_{3,5} + d_{3,4}d_{5,6}d_{0,2}d_{3,6} + d_{1,2}d_{5,6}d_{3,5}d_{3,6} + d_{3,4}d_{5,6}d_{3,5}d_{3,6} + d_{1,2}d_{0,3}d_{3,6}d_{3,5})
\end{align*}

\begin{align*}
    R_6  \textbf{y} &= -(d_{5,6}[-d_{0,1}d_{3,4}d_{3,5}(d_{2,3} + d_{4,5}) - d_{4,5}d_{3,6}(d_{3,4}(d_{0,2}+d_{3,5}) + d_{1,2}d_{3,5})] + d_{4,6}\cdot 0\\ 
    &- d_{4,5}[d_{0,1}d_{3,4}d_{5,6}d_{3,5} + d_{3,4}d_{5,6}d_{0,2}d_{3,6} + d_{1,2}d_{5,6}d_{3,5}d_{3,6} + d_{3,4}d_{5,6}d_{3,5}d_{3,6} + d_{1,2}d_{0,3}d_{3,6}d_{3,5}]\\[20pt]
    &= d_{5,6}d_{0,1}d_{3,4}d_{3,5}d_{2,3} + d_{5,6}d_{0,1}d_{3,4}d_{3,5}d_{4,5} + d_{5,6}d_{4,5}d_{3,6}d_{3,4}d_{0,2} + d_{5,6}d_{4,5}d_{3,6}d_{3,4}d_{3,5}\\ & + d_{5,6}d_{4,5}d_{3,6}d_{1,2}d_{3,5} - d_{4,5}d_{0,1}d_{3,4}d_{5,6}d_{3,5} - 
    d_{4,5}d_{3,4}d_{5,6}d_{0,2}d_{3,6} - 
    d_{4,5}d_{1,2}d_{5,6}d_{3,5}d_{3,6}\\ &-
    d_{4,5}d_{3,4}d_{5,6}d_{3,5}d_{3,6} - 
    d_{4,5}d_{1,2}d_{0,3}d_{3,6}d_{3,5}\\[20pt]
    &= d_{5,6}d_{0,1}d_{3,4}d_{3,5}d_{2,3} - 
    d_{4,5}d_{1,2}d_{0,3}d_{3,6}d_{3,5}\\[20pt]
    &= d_{3,5}(d_{0,1}d_{2,3}d_{3,4}d_{5,6} - d_{1,2}d_{4,5}d_{0,3}d_{3,6})
\end{align*}

\subsection{$\textbf{A}^\text{T}\textbf{z} = \textbf{0}$}
\label{sec:dual}

In this subsection we are assuming $d_{0,1}d_{2,3}d_{3,4}d_{5,6} \leq d_{1,2}d_{4,5}d_{0,3}d_{3,6}$.  Recall that $\textbf{A}^\text{T}$ and $\textbf{z}$ are defined as follows:

\begin{center}
\begin{tabular}{c c}
$\textbf{A}^\text{T} = 
\begin{vmatrix}
-d_{1,2}    & d_{3,4}          & 0         & 0          & 0        & 0     \\
d_{0,2} & 0          & -d_{3,5}    & 0          & 0        & 0     \\
-d_{0,1}    & 0          & 0         & d_{3,6}    & 0        & 0     \\
0     & -d_{0,4} & d_{1,5} & -d_{2,6} & d_{5,6}        & 0     \\
0     & d_{0,3}    & 0         & 0          & 0        & -d_{5,6}    \\
0     & 0          & -d_{1,3}    & 0          & -d_{3,6} & d_{4,6} \\
0     & 0          & 0         & d_{2,3}          & d_{3,5}    & -d_{4,5}
\end{vmatrix}    $ 

&
$\textbf{z} =
    \begin{vmatrix}
    \dfrac{-d_{3,4}d_{5,6}}{d_{1,2}d_{0,3}} \\
    \\
    
    \dfrac{-d_{5,6}}{d_{0,3}} \\
    \\
    
    \dfrac{-d_{3,4}d_{5,6}d_{0,2}}{d_{2,1}d_{0,3}d_{3,5}} \\
    \\
    
    \dfrac{-d_{0,1}d_{3,4}d_{5,6}}{d_{1,2}d_{0,3}d_{3,6}}\\
    \\
    \dfrac{d_{0,1}d_{2,3}d_{3,4}d_{5,6} - d_{1,2}d_{4,5}d_{0,3}d_{3,6}}{d_{1,2}d_{0,3}d_{3,6}d_{3,5}}\\
    \\
    -1
    \end{vmatrix}   $ 
\end{tabular}
\end{center}

In this section, we let $R_1, \ldots, R_7$ denote the rows of $\textbf{A}^\text{T}$.  We show the result of $R_i \textbf{z} = 0$ for each row.

\begin{align*}
R_1  Z &= \dfrac{d_{1,2}d_{3,4}d_{5,6}}{d_{1,2}d_{0,3}} + \dfrac{-d_{3,4}d_{5,6}}{d_{0,3}} \\[20pt]
            &= \dfrac{d_{3,4}d_{5,6}}{d_{0,3}} + \dfrac{-d_{3,4}d_{5,6}}{d_{0,3}} \\[20pt]
            &= 0
\end{align*}

\begin{align*}
R_2  Z &= \dfrac{-d_{3,4}d_{5,6}d_{0,2}}{d_{1,2}d_{0,3}} + \dfrac{d_{3,4}d_{5,6}d_{0,2}d_{3,5}}{d_{1,2}d_{0,3}d_{3,5}} \\[20pt]
            &= \dfrac{-d_{3,4}d_{5,6}d_{0,2}}{d_{1,2}d_{0,3}} + \dfrac{d_{3,4}d_{5,6}d_{0,2}}{d_{1,2}d_{0,3}} \\[20pt]
            &= 0 
            \end{align*}

\begin{align*}
R_3  Z &= \dfrac{d_{0,1}d_{3,4}d_{5,6}}{d_{1,2}d_{0,3}} - \dfrac{d_{0,1}d_{3,4}d_{5,6}d_{3,6}}{d_{1,2}d_{0,3}d_{3,6}} \\[20pt]
            &= \dfrac{d_{0,1}d_{3,4}d_{5,6}}{d_{1,2}d_{0,3}} - \dfrac{d_{0,1}d_{3,4}d_{5,6}}{d_{1,2}d_{0,3}} \\[20pt]
            &= 0 
            \end{align*}

\begin{align*}
R_4  Z &= \dfrac{d_{5,6}d_{0,4}}{d_{0,3}} - \dfrac{d_{3,4}d_{5,6}d_{0,2}d_{1,5}}{d_{1,2}d_{0,3}d_{3,5}}
            +  \dfrac{d_{0,1}d_{3,4}d_{5,6}d_{2,6}}{d_{1,2}d_{0,3}d_{3,6}}+ 
            \dfrac{d_{5,6}(d_{0,1}d_{2,3}d_{3,4}d_{5,6} - d_{1,2}d_{4,5}d_{0,3}d_{3,6})}{d_{1,2}d_{0,3}d_{3,6}d_{3,5}}\\[20pt]
            &= d_{5,6}\Bigg(\dfrac{d_{0,4}}{d_{0,3}} - \dfrac{d_{3,4}d_{0,2}d_{1,5}}{d_{1,2}d_{0,3}d_{3,5}}
            + \dfrac{d_{0,1}d_{3,4}d_{2,6}}{d_{1,2}d_{0,3}d_{3,6}} +
            \dfrac{d_{0,1}d_{2,3}d_{3,4}d_{5,6} - d_{1,2}d_{4,5}d_{0,3}d_{3,6}}{d_{1,2}d_{0,3}d_{3,6}d_{3,5}}\Bigg)\\[20pt]
            &=
            d_{5,6}\Bigg(\dfrac{d_{0,4}d_{1,2}d_{3,5} - d_{0,2}d_{1,5}d_{3,4}}{d_{0,3}d_{1,2}d_{3,5}} + \dfrac{d_{0,1}d_{3,4}d_{2,6}}{d_{1,2}d_{0,3}d_{3,6}} 
            +
            \dfrac{d_{0,1}d_{2,3}d_{3,4}d_{5,6} - d_{1,2}d_{4,5}d_{0,3}d_{3,6}}{d_{1,2}d_{0,3}d_{3,6}d_{3,5}}\Bigg)\\[20pt]
            &= d_{5,6}\Bigg(\dfrac{d_{1,2}(d_{0,1}+d_{1,2}+d_{2,3}+d_{3,4})(d_{3,4}+d_{4,5}) - d_{3,4}(d_{0,1}+d_{1,2})(d_{1,2}+d_{2,3}+d_{3,4}+d_{4,5})}{d_{0,3}d_{1,2}d_{3,5}} \\ &+ \dfrac{d_{0,1}d_{3,4}d_{2,6}}{d_{1,2}d_{0,3}d_{3,6}} 
           +
            \dfrac{d_{0,1}d_{2,3}d_{3,4}d_{5,6} - d_{1,2}d_{4,5}d_{0,3}d_{3,6}}{d_{1,2}d_{0,3}d_{3,6}d_{3,5}}\Bigg)
            \\[20pt]
            &= d_{5,6}\Bigg(\dfrac{d_{1,2}(d_{0,1}d_{3,4} + d_{0,1}d_{4,5}+ d_{1,2}d_{3,4} + d_{1,2}d_{4,5} + d_{2,3}d_{3,4} + d_{2,3}d_{4,5} + d_{3,4}d_{3,4} + d_{3,4}d_{4,5})}{d_{0,3}d_{1,2}d_{3,5}} 
            \\
            &- \dfrac{d_{3,4}(d_{0,1}d_{1,2} + d_{0,1}d_{2,3} + d_{0,1}d_{3,4} + d_{0,1}d_{4,5} + d_{1,2}d_{1,2} + d_{1,2}d_{2,3} + d_{1,2}d_{3,4} + d_{1,2}d_{4,5})}{d_{0,3}d_{1,2}d_{3,5}} \\ &+ \dfrac{d_{0,1}d_{3,4}d_{2,6}}{d_{1,2}d_{0,3}d_{3,6}} 
           +
            \dfrac{d_{0,1}d_{2,3}d_{3,4}d_{5,6} - d_{1,2}d_{4,5}d_{0,3}d_{3,6}}{d_{1,2}d_{0,3}d_{3,6}d_{3,5}}\Bigg)\end{align*}
{\setlength{\mathindent}{0cm}\begin{align*}
            &= d_{5,6}\Bigg(\dfrac{d_{1,2}d_{0,1}d_{4,5} + d_{1,2}d_{1,2}d_{4,5} + d_{1,2}d_{2,3}d_{4,5} - d_{3,4}d_{0,1}d_{2,3} - d_{3,4}d_{0,1}d_{3,4} - d_{3,4}d_{0,1}d_{4,5}}{d_{0,3}d_{1,2}d_{3,5}} 
            \\
            &+ \dfrac{d_{0,1}d_{3,4}d_{2,6}}{d_{1,2}d_{0,3}d_{3,6}} 
            +
            \dfrac{d_{0,1}d_{2,3}d_{3,4}d_{5,6} - d_{1,2}d_{4,5}d_{0,3}d_{3,6}}{d_{1,2}d_{0,3}d_{3,6}d_{3,5}}\Bigg)
            \\[20pt]
            &= d_{5,6}\Bigg(\dfrac{d_{1,2}d_{4,5}(d_{0,1} + d_{1,2} + d_{2,3}) - d_{0,1}d_{3,4}(d_{2,3} + d_{3,4} + d_{4,5})}{d_{0,3}d_{1,2}d_{3,5}} 
            \\
            &+ \dfrac{d_{0,1}d_{3,4}d_{2,6}}{d_{1,2}d_{0,3}d_{3,6}} 
            +
            \dfrac{d_{0,1}d_{2,3}d_{3,4}d_{5,6} - d_{1,2}d_{4,5}d_{0,3}d_{3,6}}{d_{1,2}d_{0,3}d_{3,6}d_{3,5}}\Bigg)
            \\[20pt]
            &= d_{5,6}\Bigg(\dfrac{d_{3,6}d_{1,2}d_{4,5}d_{0,3} - d_{0,1}d_{3,4}(d_{2,3} + d_{3,4} + d_{4,5})(d_{3,4} + d_{4,5} + d_{5,6})}{d_{0,3}d_{1,2}d_{3,5}d_{3,6}} 
            \\
            &+ \dfrac{d_{0,1}d_{3,4}(d_{2,3} + d_{3,4} + d_{4,5} + d_{5,6})(d_{3,4} + d_{4,5})}{d_{0,3}d_{1,2}d_{3,5}d_{3,6}} 
            +
            \dfrac{d_{0,1}d_{2,3}d_{3,4}d_{5,6} - d_{1,2}d_{4,5}d_{0,3}d_{3,6}}{d_{1,2}d_{0,3}d_{3,6}d_{3,5}}\Bigg)
            \\[20pt]
            &= d_{5,6}\Bigg(\dfrac{d_{0,1}d_{3,4}(d_{2,3}d_{3,4} + d_{2,3}d_{4,5} + d_{3,4}d_{3,4} + d_{3,4}d_{4,5} + d_{4,5}d_{3,4} + d_{4,5}d_{4,5} + d_{5,6}d_{3,4} + d_{5,6}d_{4,5})}{d_{0,3}d_{1,2}d_{3,5}d_{3,6}}
            \\
            &+ \frac{d_{3,6}d_{1,2}d_{4,5}d_{0,3} - d_{0,1}d_{3,4}(d_{2,3}d_{3,4} + d_{2,3}d_{4,5} + d_{2,3}d_{5,6} + d_{3,4}d_{3,4} + d_{3,4}d_{4,5} + d_{3,4}d_{5,6} {+} d_{4,5}d_{3,4} {+} d_{4,5}d_{4,5} {+} d_{4,5}d_{5,6})}{d_{0,3}d_{1,2}d_{3,5}d_{3,6}}  
            \\
            &+
            \dfrac{d_{0,1}d_{2,3}d_{3,4}d_{5,6} - d_{1,2}d_{4,5}d_{0,3}d_{3,6}}{d_{1,2}d_{0,3}d_{3,6}d_{3,5}}\Bigg)
            \\[20pt]
            &= d_{5,6}\Bigg(\dfrac{d_{3,6}d_{1,2}d_{4,5}d_{0,3} - d_{0,1}d_{3,4}d_{2,3}d_{5,6}}{d_{0,3}d_{1,2}d_{3,5}d_{3,6}} 
           +
            \dfrac{d_{0,1}d_{2,3}d_{3,4}d_{5,6} - d_{1,2}d_{4,5}d_{0,3}d_{3,6}}{d_{1,2}d_{0,3}d_{3,6}d_{3,5}}\Bigg)
            \\[20pt]
            &= 0 \\
\end{align*}}

\begin{align*}
R_5  Z &= -\dfrac{d_{5,6}d_{0,3}}{d_{0,3}} + d_{5,6}\\[20pt]
&= 0
\end{align*}

\begin{align*}
R_6  Z &= \dfrac{d_{1,3}d_{3,4}d_{5,6}d_{0,2}}{d_{1,2}d_{0,3}d_{3,5}} - \dfrac{d_{3,6}(d_{0,1}d_{2,3}d_{3,4}d_{5,6} - d_{1,2}d_{4,5}d_{0,3}d_{3,6})}{d_{1,2}d_{0,3}d_{3,6}d_{3,5}} - d_{4,6}
\\[20pt]
&= \dfrac{d_{3,4}d_{5,6}(d_{1,2} + d_{2,3})(d_{0,1} + d_{1,2})}{d_{1,2}d_{0,3}d_{3,5}} -\dfrac{d_{0,1}d_{2,3}d_{3,4}d_{5,6} - d_{1,2}d_{4,5}d_{0,3}d_{3,6}}{d_{1,2}d_{0,3}d_{3,5}}
- \dfrac{d_{4,6}d_{1,2}d_{0,3}d_{3,5}}{d_{1,2}d_{0,3}d_{3,5}}
\\[20pt]
&= \dfrac{1}{d_{1,2}d_{0,3}d_{3,5}} \Bigg(d_{3,4}d_{5,6}(d_{0,1}d_{1,2} + d_{0,1}d_{2,3} + d_{1,2}d_{1,2} + d_{1,2}d_{2,3}) -  
d_{0,1}d_{2,3}d_{3,4}d_{5,6} \\
&+ d_{1,2}d_{4,5}(d_{0,1}d_{3,4} + d_{0,1}d_{4,5} + d_{0,1}d_{5,6} + d_{1,2}d_{3,4} + d_{1,2}d_{4,5} + d_{1,2}d_{5,6} + d_{2,3}d_{3,4} + d_{2,3}d_{4,5} + d_{2,3}d_{5,6})\\
&- d_{1,2}(d_{4,5}d_{0,1}d_{3,4} + d_{4,5}d_{0,1}d_{4,5} + d_{4,5}d_{1,2}d_{3,4} + d_{4,5}d_{1,2}d_{4,5} + d_{4,5}d_{2,3}d_{3,4} + d_{4,5}d_{2,3}d_{4,5} + d_{5,6}d_{0,1}d_{3,4} \\
&+ d_{5,6}d_{0,1}d_{4,5} + d_{5,6}d_{1,2}d_{3,4} + d_{5,6}d_{1,2}d_{4,5} + d_{5,6}d_{2,3}d_{3,4} + d_{5,6}d_{2,3}d_{4,5}) \Bigg)
\\[20pt]
&= \dfrac{1}{d_{1,2}d_{0,3}d_{3,5}} \Bigg(d_{3,4}d_{5,6}(d_{0,1}d_{1,2} + d_{0,1}d_{2,3} + d_{1,2}d_{1,2} + d_{1,2}d_{2,3}) - d_{0,1}d_{2,3}d_{3,4}d_{5,6}\\
&-     d_{1,2}(d_{5,6}d_{0,1}d_{3,4} + d_{5,6}d_{1,2}d_{3,4} + d_{5,6}d_{2,3}d_{3,4}) \Bigg)
\\[20pt]
&= \dfrac{1}{d_{1,2}d_{0,3}d_{3,5}} \Bigg(d_{0,1}d_{2,3}d_{3,4}d_{5,6} - d_{0,1}d_{2,3}d_{3,4}d_{5,6} \Bigg)
\\[20pt]
&= 0
\end{align*}

\begin{align*}
R_7  Z &= -\dfrac{d_{0,1}d_{3,4}d_{5,6}d_{2,3} + d_{0,1}d_{2,3}d_{3,4}d_{5,6} - d_{1,2}d_{4,5}d_{0,3}d_{3,6}}{d_{1,2}d_{0,3}d_{3,6}} + d_{4,5}
\\[20pt]
&= -\dfrac{d_{1,2}d_{4,5}d_{0,3}d_{3,6}}{d_{1,2}d_{0,3}d_{3,6}} + d_{4,5}
\\[20pt]
&= -d_{4,5} + d_{4,5} 
\\[20pt]
&= 0
\end{align*}

\end{document}